\newtheorem{theorem}{Theorem}
\newtheorem{lemma}{Lemma}
\newtheorem{assumption}{Assumption}
\title{Towards Generalizable Reinforcement Learning for Trade Execution}
\author{
Chuheng Zhang\thanks{The authors have contributed equally to this work.}$^{1,2}$\and
Yitong Duan\footnotemark[1]$^2$\and
Xiaoyu Chen$^2$\and
Jianyu Chen$^2$\and
Jian Li$^2$ \And
Li Zhao$^1$
\affiliations
$^1$Microsoft Research\\
$^2$IIIS, Tsinghua University
\emails
zhangchuheng123@live.com,
\{dyt19, chen-xy21, lijian83\}@mails.tsinghua.edu.cn,
jianyuchen@tsinghua.edu.cn,
lizo@microsoft.com
}
\begin{document}

\maketitle

\begin{abstract}
    
Optimized trade execution is to sell (or buy) a given amount of assets in a given time with the lowest possible trading cost. Recently, reinforcement learning (RL) has been applied to optimized trade execution to learn smarter policies from market data.
However, we find that many existing RL methods exhibit considerable overfitting which prevents them from real deployment.
In this paper, we provide an extensive study on the overfitting problem in optimized trade execution.
First, we model the optimized trade execution as offline RL with dynamic context (ORDC), where the context represents market variables that cannot be influenced by the trading policy and are collected in an offline manner. 
Under this framework, we derive the generalization bound and find that the overfitting issue is caused by large context space and limited context samples in the offline setting.
Accordingly, we propose to learn compact representations for context to address the overfitting problem
, either by leveraging prior knowledge or in an end-to-end manner. 
To evaluate our algorithms, we also implement a carefully designed simulator based on historical limit order book (LOB) data to provide a high-fidelity benchmark for different algorithms. 
Our experiments on the high-fidelity simulator demonstrate that our algorithms can effectively alleviate overfitting and achieve better performance.
\end{abstract}

\section{Introduction}
\label{sec:trade_introduction}

Nowadays, brokerage firms are required to execute orders on behalf of their clients (e.g., retail or institutional investors) to ensure execution quality.
Optimized trade execution, whose objective is to minimize the execution cost of trading a certain amount of shares within a specified period, is an important task towards better execution quality.
In modern financial markets, most of the transactions are conducted through electronic trading.
Therefore, developing a smart agent for optimized trade execution in electronic markets is a critical problem in the financial industry.

Traditional solutions for this problem \cite{almgren2001optimal,gueant2012optimal,bulthuis2017optimal} usually make strong assumptions on the price or transaction dynamics and therefore do not apply to real scenarios. Moreover, these strategies are static (i.e., determined before the start of the trading) and therefore unable to adapt to the real-time market.
Recently, RL-based methods have been developed to learn a more adaptive agent from market data \cite{fang2021universal,ning2018double,lin2020deep}.
However, we find that existing methods suffer from considerable overfitting. 
As shown later in Figure~\ref{fig:merge} (right), trained models are prone to memorize the history instead of learning generalizable policies.

To better analyze overfitting in trade execution, we propose a framework called Offline Reinforcement learning with Dynamic Context (ORDC) to model the problem.
This framework highlights the difficulty in generalization for the trade execution task.
In trade execution, part of the observation (which we call \emph{context}) evolves independently of the agent's action, and the simulator is based on a dataset that contains a finite number of context sequences (e.g., historical price sequences). The number of context sequences is usually limited and does not increase w.r.t. the number simulation steps at training time. 
Therefore, the agent is prone to memorize the training context sequences, and perform not well on testing context sequences. We highlight the difficulty in generalization under this setting theoretically and show that limited context sequences lead to bad generalization.
This explains why generalization is hard for ORDC. The offline nature does not receive much attention in previous RL applications for trade execution that usually employ off-the-shelf online RL methods with data-driven simulation.

Since it is usually hard to increase the number of sampled context sequences for training in practice,
we find another way to address the overfitting problem. 
The theoretical analysis also indicates that larger context space leads to worse generalization under the same number of samples.
Motivated by the analysis, we propose to aggregate the context space by learning a compact context representation for better generalization.
This is effective for trade execution where the context usually contains more information than needed, but only a small amount of the underlying information is helpful for decision-making.
Moreover, we design a simplified trade execution task that motivates us to learn a compact context representation that is predictive of the statistics on future contexts.
Therefore, we propose to use the prediction of future statistics as the context representation.
We propose two algorithms:
\textbf{CASH} (\textbf{C}ontext \textbf{A}ggregate with \textbf{H}and-crafted \textbf{S}tatistics) which can learn interpretable models, and 
\textbf{CATE} (\textbf{C}ontext \textbf{A}ggregate with \textbf{E}nd-to-end \textbf{T}raining) which does not require domain knowledge.

In the experiment, we first implement a high-fidelity and open-source simulator for trade execution to provide a uniform backtest for different methods.
With this simulator, we find that previous state-of-the-art algorithms suffer from overfitting and our algorithm outperforms these baselines due to better generalization.

The contributions of this paper are as follows:
\begin{itemize}
    \item (Section \ref{sec:why_difficult}) We propose the ORDC framework and provide theoretical analysis to highlight the difficulty in generalization for the trade execution task. 
    \item (Section \ref{sec:better_generalization}) We propose two algorithms to learn generalizable representations for trade execution. One is interpretable with the help of human prior and the other learns {in an end-to-end manner.}
    % \item (Section \ref{sec:execution_simulator}) We implement a high-fidelity and open-source simulator to better compare different trade execution algorithms.
    \item (Section \ref{sec:execution_experiment}) We implement an open-source, high-fidelity, and flexible simulator to reliably compare different trade execution algorithms. With this simulator, we show that our models learn more generalizable policies and outperform previous methods.
\end{itemize}

\section{Related Work}

Most existing papers on RL generalization study under the contextual MDP setting where the agent is trained and evaluated on different sets of configurations \cite{zhang2018study,zhang2018dissection,packer2018assessing} or procedurally generated environments \cite{cobbe2019quantifying,cobbe2020leveraging,song2019observational,wang2020improving}.
ORDC is different from their settings in that 1)~the context changes in each time step and affects both the reward and the transition; 2)~the context sequence is highly stochastic and pre-collected with limited volume.
These properties contribute to the difficulty in estimating the value function or evaluating the policy and thus exacerbate overfitting.
A recent survey \cite{kirk2021survey} points out that benchmarking RL algorithms with popular procedurally generated environments is not enough and RL generalization in other settings (e.g., the offline setting) is valuable and under-explored.

Not only limited to trade execution, the structure of ORDC is common for many industrial RL application scenarios such as video stream control \cite{mao2017neural}, inventory management \cite{oroojlooyjadid2022deep}, ride-sharing \cite{shen2020auxiliary}, cellular network control \cite{dietterich2018discovering}, etc.
Although ORDC emphasizes the offline nature, it is also different from the canonical offline RL setting.
The agent is trained to avoid encountering unseen states in offline RL, whereas the agent in ORDC is evaluated on unseen context sequences which cannot be avoided.
Therefore, off-the-shelf offline RL algorithms do not apply to our setting and a more adaptive algorithm is needed.
Many previous solutions for trade execution use online RL algorithms to learn a policy from the interaction with the data-driven simulator.
However, unlike online RL settings where the testing and training environments are the same, ORDC tests the agent on unseen context sequences which brings in difficulty in generalization.
IDSD \cite{shahamiriname2008reinforcement} and input-driven MDP \cite{mao2018variance} are similar to ORDC in that they model the uncontrollable part in observation. However, they focus on the online setting and do not model the existence of a smaller latent context space (see section \ref{sec:why_difficult}).

\section{Why Generalization is Difficult for Trade Execution?}
\label{sec:why_difficult}

In this section, we first briefly introduce the trade execution problem. Then, we introduce Offline RL with Dynamic Context (ORDC) which models how RL is used to solve the problem. At last, we provide theoretical analysis for the ORDC model to highlight the difficulty in generalization for these RL applications.

\subsection{Trade Execution}
Modern electronic markets match the buyers and sellers with the limit order book (LOB), which is a collection of outstanding orders, each of which specifies the direction (i.e., buy or sell), the price, and the volume. The traders can trade via two types of orders: market orders (MOs) and limit orders (LOs).
An MO is executed immediately but may suffer from a large trading cost (e.g., due to crossing the spread or temporary market impact \cite{almgren1999value}), whereas an LO can provide the trader with a better price but at the risk of non-execution. See appendix for more details on LOB. 
Moreover, the price fluctuation or trend can also affect the trading cost.
Optimized trade execution aims to buy/sell a given amount of assets in a given time period at a trading cost as low as possible. For simplicity, we only consider liquidating (or selling) the asset.

Previously, different methods \cite{nevmyvaka2006reinforcement,lin2020deep,fang2021universal} are proposed to apply RL to trade execution, but they follow a similar procedure: The agent learns based on interactions with a data-driven simulator. The dataset contains the information collected from the real market and is used to determine simulated transitions.
Therefore, it can be regarded as an offline RL setting. The observation of the agent can be divided into the market variable (e.g., the LOB snapshot) and the private variable (e.g., remaining time and inventory). The market variable is usually high-dimensional and incorporates different forms of information to represent the noisy and partially observable market.

\subsection{Offline RL with Dynamic Context}
To model the problem structure when applying RL to trade execution, we introduce the ORDC model and show the diagram in Figure~\ref{fig:ORDC}. 
ORDC is a tuple $(\mathcal{X}, \mathcal{C}, \mathcal{S}, \phi, P, r, \gamma, \mathcal{D})$ specifying the latent context space $\mathcal{X}$, the context space $\mathcal{C}$, the state space $\mathcal{S}$, 
the unknown context decoding function $\phi:\mathcal{C}\to\mathcal{X}$,
the transition dynamics $P(x',s'|x,s,a)=P_x(x'|x)P_s(s'|x,s,a)$, the reward $r(x,s,a)$, the discount factor $\gamma$ and the offline context dataset $\mathcal{D}$ that contains context sequences.

In trade execution, the market variable serves as the context, and the private variable serves as the state.
The context evolves independently and is not affected by the action or the state. However, the context is important since it influences the reward collected by the agent and the dynamics of the state. The context $c\in\mathcal{C}$ is usually high-dimensional and corresponds to a more compact latent context $x\in\mathcal{X}$ (e.g., the key information for making trading decisions).
Specifically, they have the \emph{block structure} \cite{du2019provably}:
Each context $c\in\mathcal{C}$ uniquely determines its generating latent context $x\in\mathcal{X}$ with the unknown mapping $\phi:\mathcal{C}\to\mathcal{X}$.

\begin{figure}[t]
\centering
  \includegraphics[width=\linewidth]{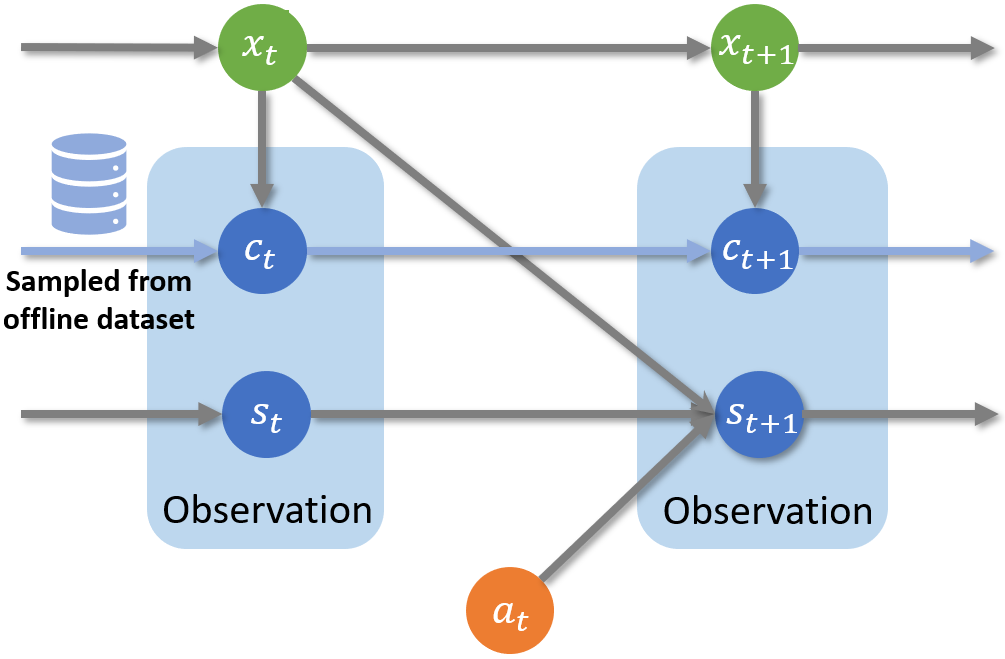}
\caption{
    Diagram of Offline RL with Dynamic Context (ORDC). The gray arrows indicate ``generates'' and the blue arrows indicate ``sampling from offline dataset''. }
\label{fig:ORDC}
\end{figure}

Given a policy $\pi:\mathcal{C}\times\mathcal{S}\to\Delta^{\mathcal{A}}$,
the Q-function is defined as $Q^\pi(c,s,a) := \mathbb{E} [\sum_{t=0}^\infty \gamma^t r(x_t, s_t, a_t)|x_0=\phi(c),s_0=s,a_0=a,\pi]$,
where $a_t \sim \pi(\cdot|c_t,s_t)$ 
and $x_t,s_t$ transits following the dynamics $P$
for all $t\ge 1$.
The agent learns from interactions with a data-driven simulator based on the offline dataset $\mathcal{D}$ and outputs a policy $\pi$ that maximizes
$J(\pi) = \mathbb{E}[Q^\pi(c,s,a)|(c,s)\sim P_0, a\sim \pi(\cdot|c,s)]$
where $P_0$ is the initial context/state probability.

\begin{figure}[t]
\centering
  \includegraphics[width=\linewidth]{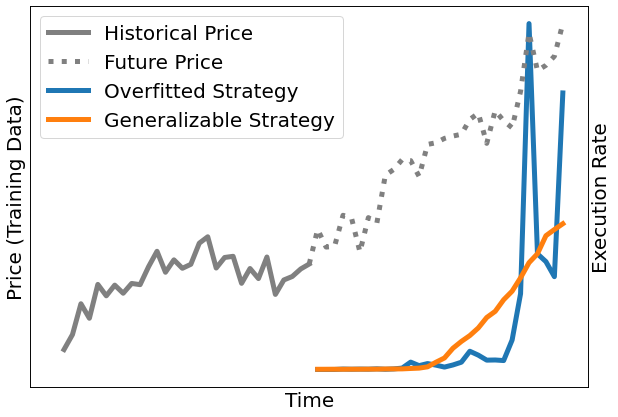}
\caption{
    The blue line: In trade execution, the deep learning agent can memorize and overfit to the context sequence used for training (the gray line) and liquidate most of the inventory on the highest price.
    The orange line: A generalizable agent should output a smooth policy considering the stochasticity of future context.
    See the corresponding experiment setting in Section \ref{sec:toy_example}. }
\label{fig:merge}
\end{figure}

\subsection{Theoretical Analysis for Generalization under ORDC}
With a slight abuse of notation, we can write the dynamics as $P(c',s'|c,s,a)=P_c(c'|c) P_s(s'|c,s,a)$ owning to the block structure.
In many real instances of ORDC, simulation is cheap and thus $P_s$ can be accurately estimated from a large number of interactions with the simulator.
Moreover, $P_s$ (e.g., the rules to match the orders) is usually simple, whereas $P_c$ (e.g., involving market dynamics)
is complex and hard to estimate.
Therefore, we further assume $P_s$ is known. The following sample complexity lower bound highlights the intrinsic difficulty in generalization under ORDC. 
(Notice that the sample complexity indicates how many samples are sufficient to ensure a small generalization gap.)

\begin{assumption}[Regularity]
\label{assumption:MDP_regularity}
\textit{
$\mathcal{C}$, $\mathcal{X}$, $\mathcal{S}$, and $\mathcal{A}$ are discrete and the immediate reward $r(x,s,a)\in[0,1], \forall x,s,a$.
}
\end{assumption}

\begin{theorem}
\label{theorem:lower_bound}
\textit{
Under Assumption \ref{assumption:MDP_regularity}, there exists a class of ORDC models $\mathbb{M}=\{ M_1, \cdots, M_m\}$ such that any algorithm $A$ needs at least $T=\Omega\Big( \frac{|\mathcal{C}|\log({|\mathcal{C}|}/{\delta})}{(1-\gamma)^3 \epsilon^2}\Big)$ context samples to learn a value function $Q^A \in \mathbb{R}^{\mathcal{C}\times\mathcal{S}\times\mathcal{A}}$ such that $\| Q^* - Q^A \|_\infty \le \epsilon$ with probability at least $1-\delta$ for all $M\in\mathbb{M}$, where $Q^*$ is the optimal action value function.
}
\end{theorem}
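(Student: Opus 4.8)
The plan is to establish this as a minimax lower bound through an information-theoretic reduction: I would exhibit a single family $\mathbb{M}$ of ORDC instances that are pairwise hard to distinguish from context samples, yet whose optimal value functions differ by more than $\epsilon$ in $\ell_\infty$ norm, so that any algorithm that is $\epsilon$-accurate with probability $1-\delta$ on \emph{every} member of $\mathbb{M}$ must implicitly identify which member it faces, and then I would lower bound the number of context samples that this identification requires. Since $P_s$ is assumed known and the reward depends on the context only through its latent variable, all of the statistical difficulty can be concentrated in the context dynamics $P_c$. I would therefore collapse $\mathcal{S}$ and $\mathcal{A}$ to a minimal (e.g.\ singleton-action) structure, so that $Q^*(c,s,a)$ reduces to the value of a Markov reward process on $\mathcal{C}$; this isolates the quantity that genuinely must be learned and is consistent with the fact that the target bound depends on $|\mathcal{C}|$ but not on $|\mathcal{S}|$ or $|\mathcal{A}|$.

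For the construction I would embed one independent hard sub-problem at each context, indexing instances by a hidden string $\theta\in\{-1,+1\}^{|\mathcal{C}|}$ in the style of Assouad's lemma. Each context $c$ is a gadget whose one-step transition probability into a rewarding versus a non-rewarding successor is set to a critical value perturbed by $\theta_c\Delta$, with $\Delta$ chosen just small enough that the two settings induce a value gap of order $\epsilon$ at $c$. I would mimic the standard tabular hard instance, placing the transition probabilities near the absorbing boundary so that the relevant Bernoulli variance, rather than $1/4$, governs both the value sensitivity and the distinguishing difficulty; this boundary placement is what turns the naive $(1-\gamma)^{-4}$ scaling into the sharp $(1-\gamma)^{-3}$. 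The block structure is respected by assigning each context its own latent value, and I would ensure the contexts are statistically decoupled, so that samples from one context reveal nothing about $\theta_{c'}$ for $c'\neq c$ and no algorithm can aggregate information to cheat.

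The sample lower bound then proceeds in two layers. At the level of a single context, Le Cam's two-point method (equivalently a KL/total-variation bound between the two Bernoulli transition laws) shows that distinguishing $\theta_c=+1$ from $\theta_c=-1$ with constant confidence requires $\Omega\!\big(1/\Delta^2\big)$ observations of that context, which after substituting the chosen $\Delta$ yields the per-context rate $\Omega\big(1/((1-\gamma)^3\epsilon^2)\big)$; boosting the per-context failure probability down to $\delta/|\mathcal{C}|$, as a union bound certifying the $\ell_\infty$ guarantee over all contexts demands, contributes the $\log(|\mathcal{C}|/\delta)$ factor. At the global level I would design the context chain and the offline dataset $\mathcal{D}$ so that visitation frequencies are essentially uniform over $\mathcal{C}$; then collecting the required number of samples at the least-visited context forces a total budget of $\Omega(|\mathcal{C}|)$ times the per-context requirement, producing the $|\mathcal{C}|$ multiplier. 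Assembling the per-context rate, the coverage penalty, and the confidence boosting gives the claimed $\Omega\big(|\mathcal{C}|\log(|\mathcal{C}|/\delta)/((1-\gamma)^3\epsilon^2)\big)$.

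I expect the main obstacle to be pinning the exponent on $(1-\gamma)$ exactly while simultaneously producing both the $|\mathcal{C}|$ prefactor and the $\log(|\mathcal{C}|/\delta)$ term. The delicate point is the joint calibration of $\Delta$: it must be large enough that the induced value gap is a full $\epsilon$ (so the instances are genuinely separated in the $Q$-function) yet small enough that the boundary-placed transition laws still have $\Omega(1/\Delta^2)$ distinguishing cost, and it is precisely this balance that fixes the rate at $(1-\gamma)^{-3}$ rather than a larger power. A secondary subtlety is that samples arrive as \emph{sequences} rather than through a generative model, so the per-context counts are themselves random; I would control them with a coupon-collector / maximal-deviation argument, ensuring that, except with probability $\delta$, some context is under-sampled unless the total budget meets the stated bound.
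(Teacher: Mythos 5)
Your proposal is correct and follows essentially the same route as the paper: the paper also collapses the $(\mathcal{S},\mathcal{A})$ dependence (reward depending only on the context, plus a wrapping argument reducing general algorithms to ones constant over $(s,a)$), builds a family of per-context gadgets whose self-loop probability is perturbed between two nearby values near the absorbing boundary so that $Q^*=\gamma/(1-\gamma p_M)$ is maximally sensitive, and then invokes the standard minimax argument (Lemma 18 of Azar et al.\ with $\mathcal{C}$ in place of $\mathcal{S}\times\mathcal{A}$) that your Assouad/Le Cam-plus-union-bound sketch re-derives. The only cosmetic difference is that the paper cites the Bernoulli-distinguishing lemma rather than re-proving it, and it does not dwell on the sequential-versus-generative sampling subtlety you flag.
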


We provide the proof in appendix. In the proof, we construct a class of ORDC models where the contexts can be divided into a small number of categories.
However, without further knowledge on how to aggregate the contexts, the algorithm still needs a large number of samples (i.e., $\tilde{O}(|\mathcal{C}|)$) to learn a generalizable policy. We will later show that the sample complexity can be improved when the context aggregation is known.

% \subsection{Implication for trade execution}
The theorem indicates that the estimated value function can overfit to limited context sequences when the context space is large or the underlying context dynamics is complex. 
Actually, this is the case for real trade execution tasks.
First, the context space is large since people usually incorporate many market indicators as the context to reflect the market more comprehensively.
Second, the context dynamics is complex and highly stochastic since it is driven by various market participants, news, economics, etc.
Moreover, we find that function approximation does not effectively improve generalization since deep learning models can also suffer from such overfitting.
To illustrate this, we present an overfitted strategy and a generalizable strategy in Figure~\ref{fig:merge}.
The overfitted strategy results from a deep learning model trained using a standard RL algorithm, and the generalizable strategy results from a similar training but with a technique that aggregates the context (see Section \ref{sec:toy_example} for details).
We can see that a standard deep RL model can memorize the highest price in the training context (price) sequence and learn an aggressive policy that liquidates nearly all the stocks on that price.

\section{Towards Better Generalization}
\label{sec:better_generalization}

Motivated by the theoretical analysis, we first show that aggregating the context can improve the generalization theoretically. Then, we introduce two practical algorithms for trade execution: 1) \textbf{CASH} (\textbf{C}ontext \textbf{A}ggregate with \textbf{H}and-crafted \textbf{S}tatistics) , and 2) \textbf{CATE}(\textbf{C}ontext \textbf{A}ggregate with \textbf{E}nd-to-end \textbf{T}raining).

\subsection{Context Aggregation}
In ORDC, the agent may overfit to limited context sequences in the dataset.
We observe that, by resorting to the context decoding function $\phi:\mathcal{C} \to \mathcal{X}$ that maps the high-dimensional context into the latent context, we can obtain a more generalizable agent. 
% \red{generalizable agent seems a little weird.}

\begin{theorem}
\label{theorem:upper_bound}
\textit{
With the access to $\phi$ and a generative model to collect context samples for $\mathcal{D}$, there exists an algorithm that learns a value function $\hat{Q}$ such that $\| Q^* - \hat{Q} \|_\infty \le \epsilon$ as long as the context transitions in $|\mathcal{D}|$ is larger than $D=O\Big( \frac{|\mathcal{X}|^2 \log(|\mathcal{X}|/\delta)}{(1-\gamma)^4 \epsilon^2} \Big)$ with probability at least $1-\delta$.
}
\end{theorem}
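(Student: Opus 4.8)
The plan is to exploit the known decoding function $\phi$ to collapse the problem from the large context space $\mathcal{C}$ onto the small latent space $\mathcal{X}$, reducing Theorem~\ref{theorem:upper_bound} to a standard model-based sample-complexity argument over an MDP of effective size $|\mathcal{X}|$. First I would note that, by the block structure, the reward $r(x,s,a)$ and the dynamics $P(x',s'|x,s,a)=P_x(x'|x)P_s(s'|x,s,a)$ depend on a context $c$ only through $x=\phi(c)$; hence $Q^\pi(c,s,a)=Q^\pi(\phi(c),s,a)$, and it suffices to learn an estimate $\hat Q$ on $\mathcal{X}\times\mathcal{S}\times\mathcal{A}$ and then lift it back via $\hat Q(c,s,a):=\hat Q(\phi(c),s,a)$. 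Because $P_s$ is assumed known and $r$ can be evaluated once $x$ is recovered through $\phi$, the only unknown object is the latent context transition $P_x(\cdot\,|\,x)$.

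The construction would then be model-based. Using the generative model I would draw context transitions, decode each endpoint through $\phi$, and form the empirical frequency estimate $\hat P_x(\cdot\,|\,x)$ for every $x\in\mathcal{X}$. Assembling $\hat P_x$ with the known $P_s$ and $r$ yields a fully specified estimated model $\hat M$, whose optimal action-value function $\hat Q=Q^*_{\hat M}$ can be computed exactly by planning, introducing no further statistical error. The key observation driving the $|\mathcal{X}|^2$ rate, rather than a $|\mathcal{X}||\mathcal{S}||\mathcal{A}|$ rate, is that $P_x$ is \emph{shared} across all $(s,a)$: I only need to estimate a single $|\mathcal{X}|\times|\mathcal{X}|$ transition matrix, i.e. one distribution over $|\mathcal{X}|$ outcomes for each of the $|\mathcal{X}|$ source contexts. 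A short computation confirms that the joint transition error factorizes, $\max_{x,s,a}\|P(\cdot|x,s,a)-\hat P(\cdot|x,s,a)\|_1=\max_x\|\hat P_x(\cdot|x)-P_x(\cdot|x)\|_1$, since summing the fixed conditional $P_s$ over $s'$ contributes a factor of one.

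The two quantitative steps are concentration and simulation. For concentration, estimating a distribution over $|\mathcal{X}|$ outcomes in $\ell_1$ to accuracy $\epsilon_P$ with confidence $1-\delta/|\mathcal{X}|$ needs $O\!\big((|\mathcal{X}|+\log(|\mathcal{X}|/\delta))/\epsilon_P^2\big)$ samples (a standard empirical-distribution deviation bound, e.g. Bretagnolle--Huber--Carol), and a union bound over the $|\mathcal{X}|$ source contexts gives $\max_x\|\hat P_x(\cdot|x)-P_x(\cdot|x)\|_1\le\epsilon_P$ from $D=O\!\big(|\mathcal{X}|^2\log(|\mathcal{X}|/\delta)/\epsilon_P^2\big)$ total transitions. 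For simulation, I would invoke the simulation lemma: since $r\in[0,1]$ so that $V_{\max}=1/(1-\gamma)$, the optimal value functions of $M$ and $\hat M$ satisfy $\|Q^*-\hat Q\|_\infty\le \frac{\gamma}{(1-\gamma)^2}\max_x\|\hat P_x(\cdot|x)-P_x(\cdot|x)\|_1$. Setting $\epsilon_P=\Theta\!\big((1-\gamma)^2\epsilon\big)$ to force the right-hand side below $\epsilon$ and substituting into $D$ produces the claimed $O\!\big(|\mathcal{X}|^2\log(|\mathcal{X}|/\delta)/((1-\gamma)^4\epsilon^2)\big)$, where the extra factor of $(1-\gamma)^{-4}$ comes from squaring the $(1-\gamma)^{-2}$ horizon amplification in the simulation lemma.

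The main obstacle I anticipate is not any single inequality but making the reduction airtight: I must verify that estimating only $P_x$ (with $P_s$ and $r$ held fixed and exact) is genuinely sufficient, that the planning step on $\hat M$ contributes no additional error, and that the error propagation through the simulation lemma indeed isolates the $\ell_1$ deviation of $P_x$ without picking up hidden $|\mathcal{S}|$ or $|\mathcal{A}|$ dependence. A secondary point to handle carefully is whether one wants the sharper $(1-\gamma)^{-3}$ rate obtainable by a tighter variance-based analysis; here the coarser simulation-lemma bound suffices and is precisely what yields the stated $(1-\gamma)^{-4}$.
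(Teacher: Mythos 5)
Your proposal is correct and follows essentially the same route as the paper's proof: estimate only the latent transition $P_x$ from $|\mathcal{X}|^2$-order many samples, use the factorization identity $\|P(\cdot|x,s,a)-\hat P(\cdot|x,s,a)\|_1=\|P_x(\cdot|x)-\hat P_x(\cdot|x)\|_1$ (the paper's Lemma 3), apply an $\ell_1$ concentration bound with a union bound over $x$, and propagate the model error through the simulation lemma (the paper instantiates it via Lemmas 2.2 and 2.3 of Agarwal et al., plus a $\sup_\pi$ argument to pass from fixed-policy to optimal value functions) to obtain the $(1-\gamma)^{-4}$ rate. No substantive differences.
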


We provide the proof of Theorem ~\ref{theorem:upper_bound} in appendix.
\footnote{Compared with Theorem \ref{theorem:lower_bound}, the additional dependency on the cardinality of the (latent) context space may be improved using more involved analysis (e.g., using a Bernstein style inequality).}
Notice that $\mathcal{X}$ is considered to have a much smaller cardinality than $|\mathcal{C}|$.
% Notice that $\mathcal{X}$ is a low-dimensional latent space with a much smaller cardinality than \blue{$|\mathcal{C}|$}.
This indicates that we can learn a good policy with much fewer context samples when $\phi$ is available.
However, the mapping $\phi$ is not provided in many real scenarios.
Nevertheless, we still hope to improve the generalization of the model by finding a mapping that can effectively aggregate the high-dimensional context.
% Therefore, the key message is that, if we can find a mapping that can effectively aggregate the high-dimensional context, the generalization of the model can be improved.
Next, we propose two algorithms to approximate the mapping either using the domain knowledge or an end-to-end training scheme.

\subsection{Practical Algorithms}
With a simulator that replays the historical context sequences, we can use the standard online RL algorithm by treating it as a regular MDP (where the observation contains the context as well as the state).
Additionally, we consider the risk of overfitting highlighted in the analysis on ORDC and propose to train a context encoder to approximate the mapping $\phi$.

In our algorithms, we guide the learning of the context encoder with the statistics extracted from future contexts.
The reasons are as follows:
The key challenge in trade execution is to handle the uncertainty of the future since the solution for the task would be easy if we knew the future, e.g., liquidating on the known highest price.
In this sense, compared with traditional methods that do not use context and output static policies, using context gives us a good indication of the future and enables dynamic adaptation.
Therefore, the information extracted from future contexts is important to guide the training of the context encoder.
However, there is spurious noise in the future contexts that is not predictable from the current context.
Therefore, we hope to use stable statistics (i.e., with a stable correlation with the optimal decision, see \cite{arjovsky2019invariant}) that can extract predictable as well as generalizable information from future contexts. 
For example, crude price movement and volatility in the future are rather predictable, whereas the specific time point in the future when the price is highest is not generalizable.
Moreover, the design of our algorithms is also motivated by the experiments on a simplified trade execution task (See Section \ref{sec:toy_example}), which indicates that guiding the context encoder with the statistics on future contexts is a simple yet effective method.

\begin{figure*}[t]
	\begin{centering}
		\includegraphics[width=1.2\columnwidth]{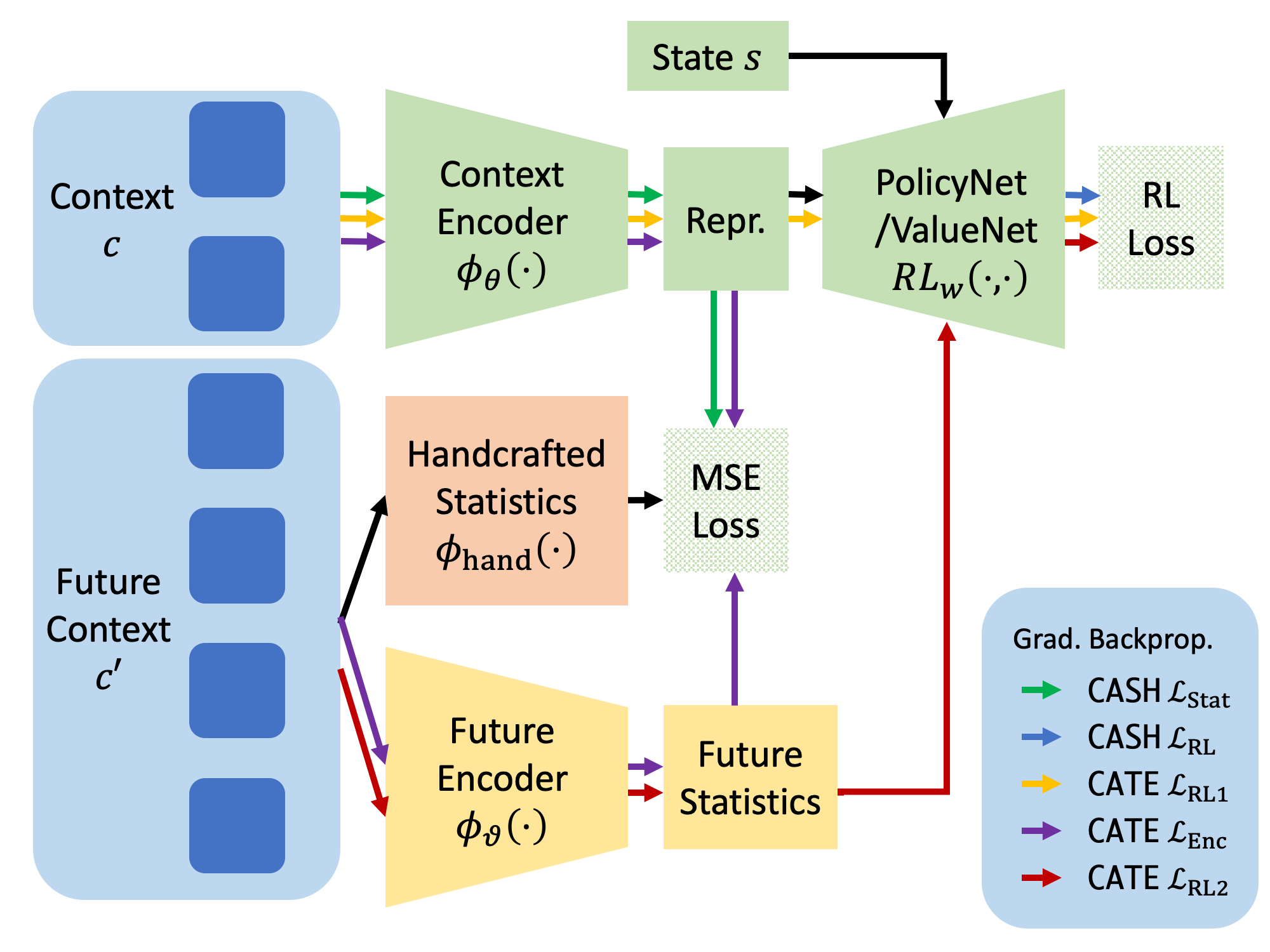}
		\caption{The architecture of variational autoencoder}
		\label{fig:algorithm}
	\end{centering}
\end{figure*}

\subsection{CASH: Context Aggregation with Hand-crafted Statistics}
CASH and CATE are based on the training of a standard DRL model $RL_w(\phi_\theta(c), s)$ that receives a context $c$ and a state $s$ and outputs actions or values. We use $\theta$ and $w$ to denote the parameters in the context encoder and the policy/value network respectively.
CASH uses hand-crafted statistics to guide the learning of the context encoder. We present the diagram of CASH in Figure~\ref{fig:algorithm}.
In the pre-training phase, we first train a context encoder that tries to predict the labels extracted using hand-crafted statistics from future contexts.
Specifically, we train the encoder $\phi_\theta(\cdot)$ with the loss $\mathcal{L}_\text{Stat}(\theta):= (\phi_\theta(c)-\phi_\text{hand}(c'))^2$ where $c'$ is the future context following the current context $c$ and $\phi_\text{hand}$ extracts hand-crafted statistics.
The loss is a mean-squared error between the context representation and the generated statistics.
% We denote the loss as $\mathcal{L}_\text{Stat}$ which is a mean-squared error between the context representation and the generated statistics. 
% Then, 
In the training phase, we fix the context encoder $\phi_\theta(\cdot)$ and train the policy $RL_w(\cdot, \cdot)$ using standard RL algorithms.
We denote the loss function of the RL algorithm as $\mathcal{L}_\text{RL}(w)$ which can be the TD error for value-based RL algorithms (e.g., DQN) or the negative policy performance estimation in policy gradient algorithms (e.g., PPO).
These losses are estimated and optimized based on the transition samples collected from the data-driven simulator.
Specifically, the context $c$ and future context $c'$ are collected by replaying the historical data; the state $s$ and the reward are calculated by the simulator.

In our experiment, we design statistics of the future contexts {$\phi_\text{hand}(\cdot)$} based on the observation that the task would be simple if the information about the future price trend and spread is discovered.
We use the following hand-crafted statistics:
1) The difference between the average future twap (time-weighted average price) and the current twap, which indicates the trend;
2) The difference between the maximum/minimum future twap and the current twap, which indicates whether the current price is a peak/bottom;
3) The volatility of the future twap, which is related to risk control;
4) The standard deviation of future spreads and the difference between the average/maximum/minimum of future spreads, which are related to the temporary market impact.

\subsection{CATE: Context Aggregation with End-to-end Training}
Designing effective statistics for the context encoder to predict requires expertise in the specific domain, which is unavailable in many real scenarios.
Therefore, we propose CATE that learns to generate such statistics via a future (context) encoder $\phi_\vartheta(\cdot)$ where $\vartheta$ is the trainable parameter.
We present the algorithm in Figure \ref{fig:algorithm}.
The algorithm learns a context encoder (that outputs the context representation), a future encoder (that outputs the future statistics), and a policy/value network.
These components are trained simultaneously with the following loss in an in an end-to-end manner:
\begin{equation} \small \nonumber
\begin{aligned}
\mathcal{L}(\theta, \vartheta, w) & = \sum_{\text{transitions}} \mathcal{L}_\text{Enc}(\theta, \vartheta) + \mathcal{L}_\text{RL1}(\theta, w) + \mathcal{L}_\text{RL2}(\vartheta, w) \\
\text{where } & \mathcal{L}_\text{Enc}(\theta, \vartheta):= (\phi_\theta(c)-\phi_\vartheta(c'))^2 \\
& \mathcal{L}_\text{RL1}(\theta, w) \text{ optimizes the RL model } {RL}_w(\phi_\theta(c), s) \\
& \mathcal{L}_\text{RL2}(\vartheta, w) \text{ optimizes the RL model } {RL}_w(\phi_\vartheta(c'), s)
\end{aligned}
\label{eq:CATE}
\end{equation}
% $\mathcal{L}(\theta, \vartheta, w) = \mathcal{L}_\text{RL1}(\theta, w) + \mathcal{L}_\text{Enc}(\theta, \vartheta) + \mathcal{L}_\text{RL2}(\vartheta, w)$
The first term $\mathcal{L}_\text{Enc}$ is used not only to train the context encoder but also encourage the future encoder to generate predictable statistics.
The second term $\mathcal{L}_\text{RL1}$ is the RL loss that optimizes the base RL model (i.e., the context encoder and the value/policy network).
The third term $\mathcal{L}_\text{RL2}$ encourages the future encoder to generate statistics that are helpful for decision making and therefore incentivizes informative future statistics.
Similar to CASH, this loss is calculated based on the transition samples collected from the data-driven simulator.
% Specifically, the context $c$ and future context $c'$ are collected by replaying the historical data; the state $s$ and the reward are calculated by the simulator.
% $\mathcal{L}_\text{RL1}(\theta, w)$ is the loss function of a standard RL algorithm that optimizes the RL model ${RL}_w(\phi_\theta(c), s)$.
% % and is used to train the context encoder and the policy/value network.
% $\mathcal{L}_\text{Enc}(\theta, \vartheta):=\sum (\phi_\theta(c)-\phi_\vartheta(c'))^2$ is a mean-squared error between the context representation and future statistics and is used to train the context encoder as well as the future encoder.
% This term helps to motivate the future encoder to generate predictable statistics.
% $\mathcal{L}_\text{RL2}(\vartheta, w)$ that optimizes the RL model ${RL}_w(\phi_\vartheta(c'), s)$ that directly receives the generated future statistics as the input.
% This term encourages the future encoder to generate statistics that are helpful for decision making and therefore incentivizes informative future statistics.

\section{Experiments}
\label{sec:execution_experiment}

\begin{table*}[t]
    \centering

\begin{tabular}{llrrrrrr}
\toprule
   &                     & \multicolumn{2}{c}{Reward (train)} & \multicolumn{2}{c}{Reward (eval)} & \multicolumn{2}{c}{Gap} \\
   Data volume &  Model  &              mean &    std &             mean &    std &    mean &    std \\
\midrule
100k & \emph{Base} &            1.8559 & 0.0379 & \textbf{1.8291} & 0.0227 & 0.0268 & 0.0317 \\
   & \emph{Bottleneck} &           -0.0024 & 0.0035 &          -0.0022 & 0.0162 & -0.0003 & 0.0150 \\
   & \emph{CATE} &            1.7143 & 0.0281 &           1.7004 & 0.0265 &  0.0139 & 0.0252 \\
   & \emph{CASH} &            1.8016 & 0.0246 &           1.8028 & 0.0187 & -0.0013 & 0.0321 \\
\midrule
10k & \emph{Base} &            2.5389 & 0.0197 &           1.6047 & 0.0266 &  0.9341 & 0.0346 \\
   & \emph{Bottleneck} &            0.0059 & 0.0131 &          -0.0006 & 0.0013 &  0.0065 & 0.0144 \\
   & \emph{CATE} &            1.7743 & 0.0261 &           1.7335 & 0.0536 &  0.0408 & 0.0471 \\
   & \emph{CASH} &            1.8329 & 0.0300 & \textbf{1.8083} & 0.0304 &  0.0246 & 0.0544 \\
\midrule
1k & \emph{Base} &            4.0340 & 0.0659 & 1.4083 & 0.0748 &  2.6258 & 0.1259 \\
   & \emph{Bottleneck} &            0.3202 & 0.8233 &           0.4142 & 0.8833 & -0.0940 & 0.1430 \\
   & \emph{CATE} &            2.0324 & 0.1823 &           1.6007 & 0.0591 &  0.4317 & 0.1736 \\
   & \emph{CASH} &            2.1578 & 0.1219 & \textbf{1.9557} & 0.0555 &  0.2021 & 0.1188 \\
\bottomrule
\end{tabular}

    \caption{The performance of different models on the simplified trade execution task. The models are evaluated over five random seeds. 
    Reward (train/eval) represents the negative trading cost on the training/evaluation set.
    % \li{reward(train) represents ... while reward(eval) represents ....}
    }
    \label{tab:toy}
\end{table*}

In this section, we first conduct a simplified task of trade execution, to illustrate the overfitting in vanilla RL methods and the effectiveness of context aggregation for generalization. Then, based on our market simulator, we compare our algorithms with other existing trade execution methods in the real stock market data.
We provide the source code in \href{https://github.com/zhangchuheng123/RL4Execution}{https://github.com/zhangchuheng123/RL4Execution}.

\subsection{Experiments on Simplified Trade Execution Task}
\label{sec:toy_example}

In this experiment, we introduce a simplified trade execution task to study the overfitting problem with a context dataset and possible solutions towards better generalization.
In this simplified task, all transactions are executed on a single price process (i.e., without ask/bid price) following the Brownian motion. Formally, the dynamic of price process is $p_{t+1} = p_t + \Delta p_t, \Delta p_t = \alpha + \sigma \xi_t$, where $p_t$ is the price at the $t$-th step, $\xi_t \sim \mathcal{N}(0,1)$ is a random variable which follows the the standard Gaussian distribution independently at each step $t$, $\alpha$ and $\sigma$ denote drift and volatility which are two statistical parameters of process. 
The task is to learn an agent that can give a  execution strategy based on the observation of price changes over the past 30 time steps.

The objective is to reduce the trading cost, and therefore we set the negative discounted trading cost as the reward (i.e., the gap between the average discounted execution price and the average discounted market price).
% Due to the strong underlying pattern in this toy example, we can expect a positive reward.

% \textcolor{red}{This simplified task is simpler than the ORDC model and focuses on 
% 1) the existence of a mapping between the high-dimensional context $c$ and the latent context $x=(\alpha, \sigma)$, and 2) training with a limited context dataset.}

This task is a simplified ORDC task, which focuses on 1) the existence of a mapping between the high-dimensional context $c$ and the latent context $x=(\alpha, \sigma)$, and 2) training with a limited context dataset.
In the following experiments, we use DDPG \cite{silver2014deterministic} as the base RL algorithm.
We design several methods to solve the task and observe the corresponding generalization ability of the learned agents.
We present the experiment results in Table \ref{tab:toy} and analyze the result of each model as follows:

\textbf{\emph{Base.}} 
We can first observe the performance of the base RL model. We can see that it performs well when the sample data is sufficient, but its performance degenerates quickly when the data volume decreases.

\textbf{\emph{Bottleneck.}} 
When the data is noisy and limited, deep learning models with high capacity are able to memorize the samples in the training set.
A natural idea is to limit the model capacity with a representation bottleneck (i.e., learning a low-dimensional representation).
With the prior knowledge that the whole price process can be represented by two parameters (i.e., drift $\alpha$ and volatility $\sigma$), so we set the representation to be a two-dimensional vector.
However, we observe that an end-to-end training process for a model with bottleneck does not result in a good performance.
Additionally, we observe that the training process is highly unstable.

\textbf{\emph{CATE.}}
In this model, we consider an encoder-decoder architecture to learn the representation with the others remaining the same as \emph{Bottleneck}.
% \textcolor{red}{Specifically, the decoder receives a two-dimensional representation generated by the encoder and tries to predict the future context sequence. (the decoder that reconstructs the past context sequence results in similar performance.)}
Specifically, the encoder generates a two-dimensional representation from the past context, and the decoder tries to predict the future context sequence based on the representation (the decoder that reconstructs the past context sequence results in similar performance).
The decoder receives a two-dimensional representation generated by the encoder and tries to predict the future context sequence.
We observe that this model results in relatively good performance even when the data is highly limited.
However, the two-dimensional representation may not only embed the estimated statistical parameters (i.e., $(\sigma, \alpha)$) but also overfit the spurious noise in the data (i.e., $\xi_t$s).

\textbf{\emph{CASH.}}
In this model, we consider using a separate training signal to supervise the learning of the representation with other architectures remaining the same as \emph{Bottleneck}.
% The loss function to train the encoder is a mean-squared-error w.r.t. a two-dimensional hand-crafted feature vector. We use estimated $(\hat{\sigma}, \hat{\alpha})$ as target feature vector.
The loss function to train the encoder is a mean-squared-error w.r.t. a two-dimensional hand-crafted target vector, which is the estimate of  $(\hat{\sigma}, \hat{\alpha})$.
We observe that this model achieves superior performance even when the data is highly scarce. Moreover, the generalization gap is only half of that in the previous model, which may benefit from the fact that this model avoids fitting the spurious noise.

% \textbf{Learned strategy.}
We also present the strategies learned by \emph{Base}  and \emph{CASH} with 1k data in Figure \ref{fig:merge} (right) and more figures in appendix.
% \ref{fig:toy}.
% In general, when the price is rising (cf. the left two columns in the figure), a good strategy should sell more at the end of the horizon; otherwise (cf. the right two columns) a good strategy need to liquidate as soon as possible. 
We can observe that the strategy learned by \emph{Base} (cf. overfitted strategy in Figure \ref{fig:merge} right) presents sharp peaks resulting from overfitting the training data.
In contrast, the strategy learned by \emph{CASH} (cf. generalizable strategy in Figure \ref{fig:merge} right) is smooth which indicates that the agent is more generalizable.

\textbf{Conclusion.}
Through the experiments on this simplified trade execution task, we have several observations: 1) Overfitting can easily occur for a deep RL model even in a setting simpler than the ORDC model. 2) Simply regularizing the capacity of the representation does not lead to better performance or generalization. 3) Reconstruction/Prediction-based encoder training combined with limited representation capacity can achieve good performance.
4) With carefully designed target features, we can prevent the encoder from fitting spurious noise and further improve generalization.

\subsection{Experiments with High-Fidelity Simulation}
\begin{table*}[t]
    \centering
        \begin{tabular}{lrrrc}
        \toprule
        Algorithm & Training & Validation & Testing & Gap \\
        \midrule
        % TWAP Strategy & \multicolumn{4}{c}{13.1217 (2.0858)} \\
        TWAP  & - & - & 14.0984 (2.1545) & - \\
        Momentum & - & - & 12.2530 (0.6151) & - \\
        \midrule
        Tuned DQN & 2.0382 (1.7684) & 5.8134 (2.1032) & 5.9240 (3.2986) & 3.8858 \\
        \quad \cite{nevmyvaka2006reinforcement} & 3.0781 (5.2447) & 8.8698 (1.5701) & 9.1223 (1.0554) & 6.0441 \\
        \quad \cite{ning2018double} & 7.3248 (5.1059) & 10.3971 (2.0066) & 9.4051 (2.6524) & 2.0804 \\
        \quad \cite{lin2020deep} & 5.8778 (7.0791) & 10.7000 (0.7024) & 12.7116 (1.1514) & 6.8338 \\
        \quad Tuned DQN + {CASH} & 2.2269 (2.0798) & 3.7992 (1.4612) & 3.4250 (1.9052) & \textbf{1.1981} \\
        \quad Tuned DQN + {CATE} & -2.8774 (1.7019) & -1.8431 (1.2983) & \textbf{0.0075} (1.6920) & 2.8849 \\
        \midrule
        Tuned PPO & -0.9505 (2.2439) & 2.1132 (0.2497) & 2.7575 (1.2070) & 3.7079 \\
        \quad \cite{daberius2019deep} & 7.6944 (11.3490) & 9.2893 (1.9000) & 12.3166 (2.8627) & 4.6222\\
        \quad \cite{lin2020end} & 5.2697 (7.4173) & 8.1153 (0.8894) & 9.4807 (1.9686) & 4.2110 \\
        \quad \cite{fang2021universal} & -4.9090 (16.1474) & 10.1338 (4.0843) & 11.8739 (5.0948) & 16.7829 \\
        \quad Tuned PPO + {CASH} & -4.6504 (0.4916) & -3.9351 (0.2810)  & -4.5760 (0.2062) & \textbf{0.0744} \\
        \quad Tuned PPO + {CATE} & -5.0364 (0.8104) & -3.8797 (0.1103) & \textbf{-4.9068} (0.3015) & 0.1296 \\
        \bottomrule
        \end{tabular}
        
    \caption{The trading cost (bp=$10^{-4}$) of different algorithms. 
    The validation set is used for hyperparameter tuning.
    The numbers are the average mean (std.) trading cost in the last 100 evaluations of the total 1000 evaluations over five different random seeds. The bold numbers indicate the algorithms with the best performance or smallest generalization gap.}
    \label{tab:main_exp}
\end{table*}

% \begin{figure}
% \centering

% \subfigure{
%     \begin{minipage}[t]{0.98\columnwidth}
%         \includegraphics[width=0.95\columnwidth]{figs/fig_ours1.png}
%     \end{minipage}
% }

% \subfigure{
%     \begin{minipage}[t]{0.98\columnwidth}
%         \includegraphics[width=0.95\columnwidth]{figs/fig_ours2.png}
%     \end{minipage}
% }

% \caption{The policy learned by our algorithm. The blue and orange lines are the best ask and bid prices respectively. The green line is the quoted price of the order placed by the agent. The black stars are the remaining inventory after the decision on that step.}
% \label{fig:demo_exp_main}
% \end{figure}

\textbf{Simulated Environment.}
To reduce the gap between simulation and the real-world environment and provide a reliable benchmark for different algorithms, we build an open source, high fidelity, and flexible market simulator for trade execution.
Compared with previous simulators that are based on bar-level simulation \cite{fang2021universal}, our simulator is based on LOB-level data and thus has higher fidelity.
Specifically, our simulator considers the temporary market impact, time delay, and second-level snapshot-by-snapshot order book reconstruction 
to minimize the sim-to-real gap.
Moreover, our simulator can easily adapt for different designs (e.g., in the action/observation space and the reward function) of previous methods and therefore enables comparing different methods uniformly.
See detailed description for the simulator and the environment settings in appendix.

\textbf{Experiment Setting.}
Our simulator is based on the LOB data of 100 most liquid stocks in China A-share market. The data collected from April 2022 to June 2022 is used as the training set, and the data collected during July 2022 and August 2022 are used as the validation and testing set respectively.
The task is to sell $0.5\%$ of the total trading volume of the last trading day in a 30-minute period randomly selected from a trading day.
The agent makes a decision (i.e., placing orders) at the start of each minute.
We train an universal model for all the stocks.
The evaluation metric is the trading cost defined as $(\bar{p}_\text{TWAP} - \bar{p}) / \bar{p}_\text{TWAP}$, 
where $\bar{p}=A_{a}/V_{a}$ is the average execution price of the agent and $A_{a}, V_{a}$ are the trading money and volume of the agent respectively, $\bar{p}_\text{TWAP}=\frac{1}{T}\sum_{t=1}^{T} p_t$ is the time-weighted average price in the given $T$ time steps.
Trading cost is measured in basis point (bp) which is $10^{-4}$.

\textbf{Baselines.} 
We compare our algorithm with some rule-based and RL-based strategies for trade execution, where \emph{TWAP} divides a large order into smaller orders of equal quantities and executing them at regular intervals throughout the entire period, \emph{Momentum} buys relatively more quantity when the price rises, and vice versa. 

\begin{table}[ht]
    \centering
    \begin{tabular}{lrrrc}
    \toprule
    Predicted &   Price &  Volume \\
    Statistics &    &   \\
    \midrule
    Avg future twap - current twap & 4.5540  & 0.0971  \\
    Max future twap - current twap & 1.5008  & -0.2111 \\
    Min future twap - current twap & -1.0073 & 0.3853  \\
    Twap volatility                & 1.3095  & -0.4248 \\
    Avg future sprd - current sprd & -0.1326 & 0.1083  \\
    Max future sprd - current sprd & 1.8983  & -0.1352 \\
    Min future sprd - current sprd & -1.3171 & 0.0292  \\
    Sprd volatility                & 6.3998  & -0.8049 \\
    \bottomrule
    \end{tabular}
    % \vspace{0.4cm}
    \captionof{table}{The impact of predicted statistics on the agent's action in CASH (based on PPO).}
    \label{tab:interpretability}
\end{table}

\begin{figure*}[th]
    \centering
    
    \includegraphics[width=1.5\columnwidth]{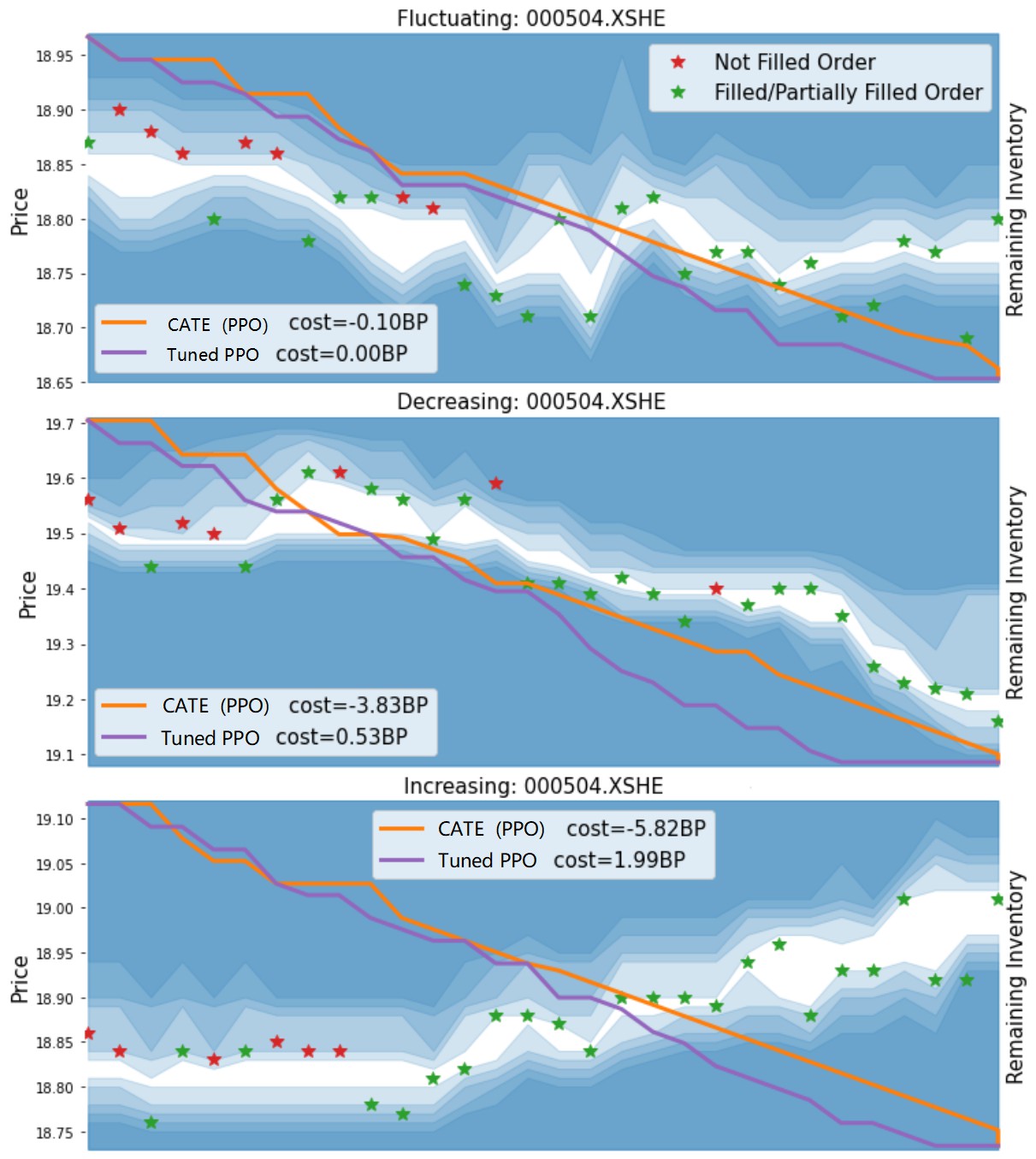}
    \caption{
    The policies under different trends learned by CATE (based on PPO) and the corresponding baseline (Tuned PPO). 
        The background shaded areas indicate the 5-level ask/bid prices, and
        the stars indicate the orders placed by {CATE}.
        The lines indicate the remaining inventory of {CATE} and the baseline algorithm.
    }
    \label{fig:example}
\end{figure*}

\textbf{Results.}
The experiment results are shown in Table \ref{tab:main_exp}.
We implement two families of algorithms based on DQN (which represents value-based RL methods) and PPO (which represents policy-based RL methods) respectively.
Most of the previous RL-based trade execution algorithms are based on these two base RL algorithms.
We implement these algorithms by following their designs in the model architecture, the observation, the action space, the reward function, etc. 
Moreover, we conduct experiments on the combinations of different designs and develop two well-designed RL-based trade execution algorithms (i.e., tuned DQN and tuned PPO in the table). Later, we implement our algorithms based on these two baselines.

First, we observe that tuned DQN/PPO outperforms other DQN/PPO-based baselines due to better designs.
Specifically, we found that using only MOs for trade execution leads to high trading costs to cross the ask-bid spread.
Moreover, the design of the reward function has a significant impact on the performance of the model.
The designs of tuned DQN/PPO and the impact of different designs can be found in appendix.
Second, we observe that 
our algorithms outperform tuned DQN/PPO due to aggregating the context.
Notice that OPD \cite{fang2021universal} uses a teacher policy that is based on the perfect information to guide the learning of the target policy, which is similar to our algorithms in extracting information from future contexts.
However, OPD suffers from a larger generalization gap since the guidance of the teacher policy is informative but may not be generalizable.
In contrast, the guidance in our algorithms (hand-crafted or generated statistics) is designed to be both informative and generalizable. 

The advantage of using hand-crafted statistics to guide the learning of the context encoder in CASH is that the meaningful context representations can lead to an interpretable learned policy.
To interpret the learned policy,
we estimate how each dimension of the representation (i.e., the predicted statistics) affects the selected quoted price and volume based on a set of collected context representations.
We list the slopes estimated using linear regression in Table \ref{tab:interpretability}.
In this way, we can examine the learned policy.
For example, we can see from the first row that the quoted price is 4.554 bp higher for every unit increase in the predicted price trend.

To present the policy of the agent under different trends, we plot an example of trade execution, as shown in Figure \ref{fig:example}.
Roughly speaking, the agent liquidates evenly across the horizon.
Meanwhile, the agent can adaptively place the order according to the trend and the timeline.
For example, the agent tends to place LOs at higher price levels in the early stage of the horizon and becomes more conservative in the latter stage.
Moreover, when there is a rising trend, the agent is more inclined to quote at a higher price to catch the trend.
We also show how the baseline PPO algorithm liquidates (cf. the purple lines).
Compared with our algorithm, the baseline tends to liquidate more at some specific steps which may result from overfitting the training data.
and the baseline tends to complete liquidation before the end of the given horizon which may lose trading opportunities.

\section{Conclusion}
\label{sec:execution_conclusion}
To analyze the overfitting problem when applying RL to the trade execution task, 
% \textcolor{red}{we propose an Offline RL with Dynamic Context (ORDC) framework that also applies to other real-world RL applications that learns from offline context data.}
we propose an Offline RL with Dynamic Context (ORDC) framework. 
In this framework, we derive the generalization bound for the ORDC and find that the generalization gap results from limited data and large context space.
Motivated by the theoretical analysis, we propose to aggregate the context space to learn a generalizable agent.
Accordingly, we design two algorithms: \emph{CASH} that learns an interpretable agent using hand-crafted future statistics to guide context aggregation, and \emph{CATE} that learns a compact context representation without resorting to domain knowledge.
The experiments on both a simplified trade execution task and a well-designed high-fidelity simulated environment show that our algorithms can generate more generalizable agents.
Moreover, combined with a better design on the model components (e.g., the reward function and the action space), our algorithms achieve significant performance improvement over the previous methods.
In the future, we plan to apply the ORDC framework into other real-world RL applications that learn from offline context data.

% \bibliographystyle{named}
% \bibliography{reference}

\clearpage
\textbf{\Huge Appendix}
\appendix
\section{Limit Order Book}
\label{app:lob}
We present an LOB snapshot of one stock in Table \ref{table:LOB} which consists of five levels of the ask/bid prices and volumes.
\emph{Ask/bid prices and volumes} indicate that there are specific volumes of stock to sell/buy at the specified prices.
The first level of ask/bid price (i.e., the lowest ask price or the highest bid price) is referred to as \emph{the best ask/bid price}.
The \emph{mid price} is the average of the best ask price and the best bid price, and the \emph{spread} is the gap between them.
For example, the mid price is $(\$29.11 + \$29.01) / 2 = \$29.06$ and the spread is $\$29.11-\$29.01=\$0.10$ on the given snapshot.
The traders can trade via two types of orders: market orders (MOs) and limit orders (LOs).
An MO specifies the volume and is executed immediately with the best available price.
For example, an MO that sells $500$ shares of the stock will be executed at the average price $(\$29.01 \times 100 + \$29.00 \times 300 + \$28.99 \times 100) / 500 = \$29.00$. 
We observe that the average execution price is lower than the mid price, and the gap $\$29.06 - \$29.00=\$0.06$ is referred to as the \emph{temporary market impact}.
An LO specifies the volume as well as the price such that the trader will buy/sell the asset with a price no higher/lower than the preset price.
If LO is not executed immediately, it will be left in the LOB and appended to the order queue on the corresponding price level. 

\begin{table}[tbhp]
\centering
\begin{tabular}{ crr  }
 & Price & Volume \\
 \hline
 Ask 5 & \$29.15 & 10,000 \\
 Ask 4 & \$29.14 & 2,000 \\
 Ask 3 & \$29.13 & 1,000 \\
 Ask 2 & \$29.12 & 100 \\
 Ask 1 & \$29.11 & 200 \\
 \hline
 Bid 1 & \$29.01 & 100 \\
 Bid 2 & \$29.00 & 300 \\
 Bid 3 & \$28.99 & 800 \\
 Bid 4 & \$29.95 & 1,100 \\
 Bid 5 & \$29.09 & 1,900 \\
 \hline
\end{tabular}
\vspace{0.5cm}
\caption{A snapshot of the limit order book.}
\label{table:LOB}
\end{table}

\section{Proof of Theorem 1}
\label{app:proof1}
The proof of the lower bound mainly follows \cite{azar2013minimax} which studies the standard MDP.
First, we construct a class of ORDC models $\mathbb{M}$.
Then, following the analysis in \cite{azar2013minimax}, we can obtain the result that any algorithm in a subset of algorithms $A \in \mathfrak{A}' \subset \mathfrak{A}$ can fail to learn an accurate value function for at least one of the models in $\mathbb{M}$ with a high probability if the context dataset is not large enough.
At last, we generalize this conclusion to any of the possible algorithms $A\in\mathfrak{A}$ by showing that we can always find an algorithm $A'\in\mathfrak{A}'$ that performs no worse than $A$.

\begin{proof}
We first define a class of ORDC models $\mathbb{M}$.
For each instance $M$ in the class, the reward depends only on the context, i.e., $r(c,s,a)=r(c)$.
The context space is divided into three disjoint subsets of equal cardinality, i.e., $\mathcal{C} = \mathcal{C}^0 \cup \mathcal{C}^1 \cup \mathcal{C}^2$, $|\mathcal{C}|=3K$, and $|\mathcal{C}^0|=|\mathcal{C}^1|=|\mathcal{C}^2|=K$.
For each context in $c^0 \in \mathcal{C}^0$, it will trainsit to the corresponding context $c^1 \in \mathcal{C}^1$ with probability 1.
For each context $c^1 \in \mathcal{C}^1$, it will transit to the corresponding context $c^2 \in \mathcal{C}^2$ with probability $1-p_M$ and to itself with probability $p_M$.
For each context $c^2 \in \mathcal{C}^2$, it will transit to itself with probability 1.
The transition probability $p_M$ for each $c\in\mathcal{C}^1$ is selected from $\{p, p+\alpha\}$, where $\alpha$ and $p$ satisfy $0<p<p+\alpha<1$, and the exact value is determined in the analysis.
The reward function is 
\begin{equation}
    r(c) = \begin{cases}
    1 & \text{if } c\in \mathcal{C}^1 \\
    0 & \text{otherwise}
    \end{cases}.
\end{equation}
It is not hard to see that $Q_M^*(c,s,a)=\frac{\gamma}{1-\gamma p_M} =: \mathbb{Q}_M^*(c), \forall c\in\mathcal{C}^0$.
% Now, let us consider two ORDC models $M_1$ and $M_2$ with the transition probability $p_M$ equals to $p$ and $p+\alpha$ respectively, where $\alpha$ and $p$ satisfies that $0<p<p+\alpha<1$ and the exact value is determined in the analysis.
% We denote $\mathbb{M} = \{M_1, M_2\}$.

Now, we consider a subset of algorithms $\mathfrak{A}' \subset \mathfrak{A}$.
Each algorithm $A\in\mathfrak{A}'$ that consumes $T$ samples from $c^1\in\mathcal{C}^1$ outputs a value function that takes the same value for different $(s,a)$s, i.e., $Q_T^A(c,s,a) = Q_T^A(c,s',a') =: \mathbb{Q}_T^A(c), \forall s,a,s',a'$. 
Then, we can obtain the following conclusion by replacing $\mathcal{S}\times\mathcal{A}$ with $\mathcal{C}$ in Lemma 18 in \cite{azar2013minimax}:
\begin{lemma}
\textit{
For $\delta \in (0,1/2)$ and any algorithm $A\in\mathfrak{A}'$ using a total number of context transition samples less than $T=c_1 \frac{|\mathcal{C}|}{(1-\gamma)^3 \epsilon^2} \log ( c_2 |\mathcal{C}|/\delta )$, there exists $M_m\in\mathbb{M}$ such that 
$$
\mathbb{P}_m (\| \mathbb{Q}^*_{M_m} - \mathbb{Q}^A_T \|_\infty > \epsilon) > \delta,
$$
where $\mathbb{P}_m$ is the probability under the model $M_m$ and $c_1, c_2$ are positive constants.
}
\end{lemma}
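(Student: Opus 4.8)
The plan is to follow the minimax lower-bound template of \cite{azar2013minimax}, treating each context in $\mathcal{C}^1$ as an independent hard instance playing the role of a state-action pair in their construction. Since the reward and dynamics on $\mathcal{C}^0$ and $\mathcal{C}^2$ are fixed and deterministic, all the statistical difficulty concentrates in estimating the self-loop probabilities $p_M(c)$ for $c\in\mathcal{C}^1$: the optimal value $\mathbb{Q}_M^*(c)=\frac{\gamma}{1-\gamma p_M}$ depends on the model only through these $K=|\mathcal{C}|/3$ numbers. The whole argument therefore reduces to showing that identifying all of them is expensive.

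First I would quantify the value gap induced by the two candidate probabilities $p_M(c)\in\{p,p+\alpha\}$. A direct computation gives
\[
\Big|\tfrac{\gamma}{1-\gamma p}-\tfrac{\gamma}{1-\gamma(p+\alpha)}\Big|=\frac{\gamma^2\alpha}{(1-\gamma p)(1-\gamma(p+\alpha))}.
\]
I would then choose $p$ close to $1$ so that $1-\gamma p=\Theta(1-\gamma)$, and set $\alpha=\Theta(\epsilon(1-\gamma)^2)$ so that this gap exceeds $2\epsilon$. Under this tuning, any estimate with $\|\mathbb{Q}_{M}^*-\mathbb{Q}_T^A\|_\infty\le\epsilon$ must correctly decide, for \emph{every} $c\in\mathcal{C}^1$, which of the two probabilities generated its transitions; equivalently, accurate value estimation in the $\ell_\infty$ sense forces the algorithm to solve $K$ binary hypothesis tests simultaneously, and a single wrong decision already pushes the error above $\epsilon$.

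Second I would lower-bound the cost of a single test. The transition samples collected at a fixed $c\in\mathcal{C}^1$ are i.i.d. Bernoulli$(p_M(c))$ observations (self-loop versus exit), so distinguishing $p$ from $p+\alpha$ is controlled by the KL divergence between the two Bernoulli laws, which is of order $\alpha^2/(p(1-p))$. With the chosen parameters one has $p(1-p)=\Theta(1-\gamma)$, so reliably identifying one context needs on the order of $\frac{p(1-p)}{\alpha^2}=\Theta\!\left(\frac{1}{(1-\gamma)^3\epsilon^2}\right)$ samples drawn at that context.

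Finally I would assemble these pieces through an averaging (minimax) argument over $\mathbb{M}$, drawing each $p_M(c)$ independently and uniformly from $\{p,p+\alpha\}$. The key point is that a fixed total budget of $T$ samples cannot feed all $K$ contexts at once: to succeed on every context with overall failure probability below $\delta$ one needs, by a union / change-of-measure bound, roughly $\frac{p(1-p)}{\alpha^2}\log(K/\delta)$ samples per context, hence $T=\Omega\!\left(\frac{|\mathcal{C}|}{(1-\gamma)^3\epsilon^2}\log(c_2|\mathcal{C}|/\delta)\right)$ in aggregate; below this threshold some context is necessarily under-sampled and mis-identified with probability exceeding $\delta$. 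I expect this last aggregation step to be the main obstacle: converting a bound on the \emph{total, adaptively allocated} budget into a guaranteed failure on a specific $M_m\in\mathbb{M}$ requires care, because the algorithm may concentrate its samples, and recovering the $\log(|\mathcal{C}|/\delta)$ factor from the demand of simultaneous success over all $K$ contexts is exactly where I would invoke the Le Cam/Fano-type estimates of \cite{azar2013minimax} rather than re-deriving them.
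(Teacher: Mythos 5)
Your proposal is correct and follows essentially the same route as the paper: the paper's proof of this lemma is precisely an application of Lemma~18 of \cite{azar2013minimax} with the state--action pairs replaced by the contexts in $\mathcal{C}^1$, and your sketch (value-gap tuning of $\alpha=\Theta(\epsilon(1-\gamma)^2)$, per-context Bernoulli testing cost $\Theta(1/((1-\gamma)^3\epsilon^2))$, and aggregation over the $K$ contexts via the minimax machinery of \cite{azar2013minimax}) is exactly the content of that imported lemma. You unpack more of the internal argument than the paper does, but you defer to the same source for the final adaptive-allocation/change-of-measure step, so the two proofs coincide in substance.
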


At last, we extend the conclusion to $\mathfrak{A}$ with the following lemma.
\begin{lemma}
\textit{
For any algorithm $A\in \mathfrak{A}$ that outputs $Q^A \in \mathbb{R}^{\mathcal{C}\times\mathcal{S}\times\mathcal{A}}$, we can always find an algorithm $A'\in\mathfrak{A}'$ such that 
$$
\| Q^*_M - Q^A \|_\infty \ge \| \mathbb{Q}^*_M - \mathbb{Q}^{A'} \|_\infty
$$
}
\end{lemma}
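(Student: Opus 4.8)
The plan is to build the required $A'\in\mathfrak{A}'$ by \emph{projecting} the output of $A$ onto the context-only value functions, and then to observe that this projection cannot increase the error, because the optimal value function of every model in $\mathbb{M}$ is itself context-only.

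First I would record the key structural fact about the constructed class $\mathbb{M}$: since $r(c,s,a)=r(c)$ and the context transition $P_c(c'|c)$ is independent of $(s,a)$, the Bellman optimality operator has a fixed point that does not depend on $(s,a)$. Positing $Q^*_M(c,s,a)=\mathbb{Q}^*_M(c)$ makes the inner $\max_{a'}$ act on a quantity that is constant in $a'$, so that $\mathbb{Q}^*_M(c)=r(c)+\gamma\,\mathbb{E}_{c'\sim P_c(\cdot|c)}[\mathbb{Q}^*_M(c')]$ is self-consistent; by uniqueness of the fixed point of the (contractive) Bellman optimality operator, this is the optimal action-value function. Hence $Q^*_M(c,s,a)=\mathbb{Q}^*_M(c)$ for \emph{all} $c\in\mathcal{C}$, which extends the computation already given for $c\in\mathcal{C}^0$.

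Next I would define $A'$ as follows: fix an arbitrary reference pair $(s_0,a_0)$, let $A'$ run $A$ on the same context-transition samples, and output the context-only function $\mathbb{Q}^{A'}(c):=Q^A(c,s_0,a_0)$. By construction $A'\in\mathfrak{A}'$, it consumes exactly the same number of samples as $A$, and the projection is a deterministic post-processing that never references the identity of $M$, so $A'$ is a single, model-independent algorithm. The inequality then follows for every realization of the samples by using the structural fact and discarding all slices except $(s_0,a_0)$:
\[
\|\mathbb{Q}^*_M-\mathbb{Q}^{A'}\|_\infty = \max_{c}\bigl|\,Q^*_M(c,s_0,a_0)-Q^A(c,s_0,a_0)\,\bigr| \le \|Q^*_M-Q^A\|_\infty .
\]

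The step requiring the most care---and the main obstacle---is respecting the logic needed to chain this reduction with the preceding lemma: the \emph{same} $A'$ must satisfy the inequality for every $M\in\mathbb{M}$ (in particular for the adversarial $M_m$ the preceding lemma produces), and the inequality must hold \emph{pointwise over sample realizations} rather than only in expectation, so that the statement $\mathbb{P}_m(\|\mathbb{Q}^*_{M_m}-\mathbb{Q}^{A'}\|_\infty>\epsilon)>\delta$ transfers verbatim to $\mathbb{P}_m(\|Q^*_{M_m}-Q^A\|_\infty>\epsilon)>\delta$. Both requirements are secured by the construction: the projection is defined independently of $M$, and the displayed inequality holds deterministically for each fixed $Q^A$. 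This closes the extension from $\mathfrak{A}'$ to all of $\mathfrak{A}$ and completes the proof of Theorem~\ref{theorem:lower_bound}.
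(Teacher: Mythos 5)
Your proposal is correct and takes essentially the same route as the paper: the paper also constructs $A'$ by fixing an arbitrary $(s_0,a_0)$ and setting $\mathbb{Q}^{A'}(c):=Q^A(c,s_0,a_0)$, relying on the fact that $Q^*_M$ is constant in $(s,a)$ for the constructed class. Your write-up simply fills in the details the paper leaves implicit (the $(s,a)$-independence of $Q^*_M$ on all of $\mathcal{C}$, the model-independence of the wrapper, and the pointwise nature of the inequality), all of which check out.
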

Actually, we can construct an algorithm $A'\in\mathfrak{A}'$ for each $A\in\mathfrak{A}$ by wrapping the output of $A$ as follows: $\mathbb{Q}^{A'}(c) := Q^A(c,s_0, a_0)$ for arbitrary fixed $s_0\in\mathcal{S}, a_0\in\mathcal{A}$.
\end{proof}

\textbf{Discussion.}
In the class of constructed ORDC, we can categorize the contexts into six groups, three of which are the contexts $c^0\in\mathcal{C}^0$, $c^1\in\mathcal{C}^1$ and $c^2\in\mathcal{C}^2$ corresponding to $p_M=p$ and the other three corresponds to $p_M=p+\alpha$.
Aggregating the contexts into these six groups does not bring any loss in representing the optimal policy or value function.
In fact, it is possible to get rid of the $|\mathcal{C}|$ dependence in the sample complexity lower bound if the category of each context is known which is illustrated in Theorem 2.

\section{Proof of Theorem 2}
\label{app:proof2}
In the following proof, we use the lower case $p$ to represent the transition probability and the upper case $P$ to represent the corresponding matrix form, which is slightly different from the notation in the main text.
With access to $\phi$, the underlying transition model can be written as $p(x',s'|x,s,a) = p_x(x'|x) p_s(s'|x,s,a)$.
With a generative model, we can collect $N$ context transition samples from the context that corresponds to each latent context $x\in\mathcal{X}$.
Therefore, we consume a total number of $N|\mathcal{X}|$ samples.
Our algorithm estimates $p_x$ with these samples as follows: 
$\hat{p}_x(x'|x) = \text{count}(x,x') / \text{count}(x)$,
where $\text{count}(x,x')$ is the number of transitions from $x$ to $x'$ in the samples and $\text{count}(x)$ is the number of transitions starting from $x$ in the samples.

We denote $\hat{p}(x',s'|x,s,a) = \hat{p}_x(x'|x) p_s(s'|x,s,a)$ and the matrix form as $\hat{P}\in\mathbb{R}^{ |\mathcal{X}||\mathcal{S}||\mathcal{A}| \times |\mathcal{X}||\mathcal{S}| }$.
We can link the estimation error of $\hat{p}_x$ to that of $\hat{p}$ with the following lemma:
\begin{lemma}
\label{lemma:link}
\textit{
For any $x\in\mathcal{X}, s\in\mathcal{S}, a\in\mathcal{A}$, we have
$$
\| p(\cdot|x,s,a) - \hat{p}(\cdot|x,s,a) \|_1 = \| p_x(\cdot| x) - \hat{p}_x(\cdot|x) \|_1
$$
}
\end{lemma}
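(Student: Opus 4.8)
The plan is to prove Lemma~\ref{lemma:link} by exploiting the factored structure of the transition model, which makes the computation essentially a marginalization argument. The key observation is that both $p(\cdot|x,s,a)$ and $\hat{p}(\cdot|x,s,a)$ share the \emph{same} state-transition factor $p_s(s'|x,s,a)$, differing only in their context-transition factors $p_x$ versus $\hat{p}_x$. So I would start by writing out the $L_1$ distance explicitly as a double sum over the joint next-state space $(x',s')$:
\begin{equation} \nonumber
\| p(\cdot|x,s,a) - \hat{p}(\cdot|x,s,a) \|_1 = \sum_{x'}\sum_{s'} \bigl| p_x(x'|x)\, p_s(s'|x,s,a) - \hat{p}_x(x'|x)\, p_s(s'|x,s,a) \bigr|.
\end{equation}

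The central step is then to factor the shared term $p_s(s'|x,s,a)$ out of the absolute value. Since $p_s(s'|x,s,a) \ge 0$, we have $|p_x(x'|x) p_s - \hat{p}_x(x'|x) p_s| = p_s \cdot |p_x(x'|x) - \hat{p}_x(x'|x)|$ for each fixed $(x',s')$. After pulling this out, the inner sum over $s'$ becomes $\sum_{s'} p_s(s'|x,s,a)$, which equals $1$ because $p_s(\cdot|x,s,a)$ is a probability distribution over the next state. What remains is exactly $\sum_{x'} |p_x(x'|x) - \hat{p}_x(x'|x)| = \| p_x(\cdot|x) - \hat{p}_x(\cdot|x) \|_1$, giving the claimed equality.

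I do not anticipate a serious obstacle here, since the result is an identity rather than an inequality and follows directly from the product structure together with the nonnegativity and normalization of $p_s$. The only point requiring mild care is making the interchange of summation order and the factoring rigorous; one should confirm that $p_s$ does not depend on $x'$ (which it does not, by the definition $\hat{p}(x',s'|x,s,a) = \hat{p}_x(x'|x)\, p_s(s'|x,s,a)$), so that it is legitimately constant across the sum over $x'$ when regrouping terms. Once that factorization is justified, the normalization $\sum_{s'} p_s(s'|x,s,a) = 1$ collapses the extra dimension cleanly. This lemma is the bridge that lets the subsequent analysis reduce the estimation error of the full joint transition model to the estimation error of the much lower-dimensional context-transition model $p_x$, which is where the improved $|\mathcal{X}|$-dependence in Theorem~\ref{theorem:upper_bound} ultimately comes from.
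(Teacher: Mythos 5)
Your proposal is correct and follows exactly the same route as the paper's proof: expand the $L_1$ norm as a double sum over $(x',s')$, pull the nonnegative shared factor $p_s(s'|x,s,a)$ out of the absolute value, factor the double sum into a product, and use $\sum_{s'} p_s(s'|x,s,a)=1$ to collapse the state dimension. No differences worth noting.
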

\begin{proof}[Proof of Lemma \ref{lemma:link}]
$$
\begin{aligned}
& \| p(\cdot|x,s,a) - \hat{p}(\cdot|x,s,a) \|_1 \\
= & \sum_{x',s'} |p(x',s'|x,s,a) - \hat{p}(x',s'|x,s,a)|  \\
= & \sum_{x',s'} |p_x(x'|x) p_s(s'|x,s,a) - \hat{p}_x(x'|x) p_s(s'|x,s,a) |  \\
= & \sum_{x',s'} |p_x(x'|x) - \hat{p}_x(x'|x) | p_s(s'|x,s,a)  \\
= & \Big( \sum_{x'} |p_x(x'|x) - \hat{p}_x(x'|x) | \Big) \Big( \sum_{s'} p_s(s'|x,s,a) \Big)  \\
= & \| p_x(\cdot| x) - \hat{p}_x(\cdot|x) \|_1
\end{aligned}
$$
\end{proof}

Next, we introduce several lemmas for MDP. 
Notice that they also apply to our setting by treating the state in MDP as $(x,s)$ in the ORDC model.
\begin{lemma}[Lemma 2.2 in \cite{agarwal2019reinforcement}]
\label{lemma:22}
$$
Q^\pi - \hat{Q}^\pi = \gamma (I - \gamma \hat{P}^\pi)^{-1} (P - \hat{P}) V^\pi, \forall \pi
$$
\end{lemma}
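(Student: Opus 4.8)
The plan is to prove this value-difference identity by writing the Bellman equations for the two models that share the same reward $r$ but differ in their transition kernels $P$ and $\hat{P}$, and then to isolate the transition-estimation error from the value-function error through a single add-and-subtract step. Throughout I treat the ORDC state as the augmented pair $(x,s)$, so that $P,\hat{P}\in\mathbb{R}^{|\mathcal{X}||\mathcal{S}||\mathcal{A}|\times|\mathcal{X}||\mathcal{S}|}$ map a state-action pair to a distribution over next states, while $P^\pi,\hat{P}^\pi\in\mathbb{R}^{|\mathcal{X}||\mathcal{S}||\mathcal{A}|\times|\mathcal{X}||\mathcal{S}||\mathcal{A}|}$ are the induced state-action transition matrices. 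The only structural facts I need are the factorizations $P^\pi=P M^\pi$ and $\hat{P}^\pi=\hat{P} M^\pi$, where $M^\pi\in\mathbb{R}^{|\mathcal{X}||\mathcal{S}|\times|\mathcal{X}||\mathcal{S}||\mathcal{A}|}$ is the linear operator that averages a Q-function over actions under $\pi$, so that $V^\pi=M^\pi Q^\pi$ and $\hat{V}^\pi=M^\pi\hat{Q}^\pi$.

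First I would write the two fixed-point equations
$$Q^\pi=r+\gamma P V^\pi,\qquad \hat{Q}^\pi=r+\gamma\hat{P}\hat{V}^\pi,$$
subtract them so the shared reward cancels, and insert the cross term $\gamma\hat{P}V^\pi$:
$$Q^\pi-\hat{Q}^\pi=\gamma P V^\pi-\gamma\hat{P}\hat{V}^\pi=\gamma(P-\hat{P})V^\pi+\gamma\hat{P}(V^\pi-\hat{V}^\pi).$$
The first summand is exactly the transition error applied to the true value function, which is the term we want to keep; the second summand is recursive, because $V^\pi-\hat{V}^\pi=M^\pi(Q^\pi-\hat{Q}^\pi)$ and $\hat{P}M^\pi=\hat{P}^\pi$, so it equals $\gamma\hat{P}^\pi(Q^\pi-\hat{Q}^\pi)$. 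Collecting the recursive term on the left then gives $(I-\gamma\hat{P}^\pi)(Q^\pi-\hat{Q}^\pi)=\gamma(P-\hat{P})V^\pi$.

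The final step is to invert $(I-\gamma\hat{P}^\pi)$, which is legitimate because $\hat{P}^\pi$ is row-stochastic and hence has spectral radius at most one, so for $\gamma<1$ the Neumann series $\sum_{k\ge0}\gamma^k(\hat{P}^\pi)^k$ converges and furnishes the inverse. Multiplying through yields $Q^\pi-\hat{Q}^\pi=\gamma(I-\gamma\hat{P}^\pi)^{-1}(P-\hat{P})V^\pi$, as claimed. The one point that needs genuine care is keeping the rectangular shapes of $P,\hat{P},M^\pi$ consistent and verifying the factorizations $P^\pi=PM^\pi$ and $\hat{P}^\pi=\hat{P}M^\pi$; once these are pinned down the identity is a one-line rearrangement, and because only the abstract MDP structure of the augmented state $(x,s)$ is used, the derivation transfers verbatim to the ORDC model.
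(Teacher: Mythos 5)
Your proof is correct. The paper itself does not prove this lemma --- it imports it verbatim as Lemma~2.2 of the cited reference \cite{agarwal2019reinforcement} --- and your derivation (subtract the two Bellman fixed-point equations with shared reward, add and subtract $\gamma\hat{P}V^\pi$, fold the recursive term into $(I-\gamma\hat{P}^\pi)$ via $\hat{P}M^\pi=\hat{P}^\pi$, and invert by the Neumann series) is exactly the standard simulation-lemma argument given there, with the bookkeeping of the rectangular operators $P,\hat{P},M^\pi$ handled correctly; it applies to the ORDC setting unchanged once the state is taken to be the augmented pair $(x,s)$, as the paper notes.
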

\begin{lemma}[Lemma 2.3 in \cite{agarwal2019reinforcement}]
\label{lemma:23}
\textit{
For any policy $\pi$, and vector $v\in\mathbb{R}^{|\mathcal{X}||\mathcal{S}||\mathcal{A}|}$, we have
$$
\| (I - \gamma P^\pi)^{-1} v \|_\infty \le \|v\|_\infty / (1-\gamma)
$$
}
\end{lemma}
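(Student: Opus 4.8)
The plan is to exploit the fact that $P^\pi$, the transition matrix over (latent-context, state, action) triples induced by $\pi$, is a \emph{stochastic matrix}: each of its rows is a probability distribution over the next triple, so its entries are nonnegative and sum to one along every row (indeed, summing $P(x',s'|x,s,a)\,\pi(a'|x',s')$ over $x',s',a'$ gives one). I would begin by recording this property, together with the elementary consequence that any row-stochastic matrix $M$ is non-expansive in the $\ell_\infty$ norm, i.e. $\|Mw\|_\infty \le \|w\|_\infty$ for every $w$. This holds because each coordinate of $Mw$ is a convex combination of the coordinates of $w$ (weights nonnegative, summing to one), so its absolute value is at most $\max_i |w_i| = \|w\|_\infty$.

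Next I would expand the resolvent as a Neumann series. Since $0 \le \gamma < 1$ and the spectral radius of the stochastic matrix $P^\pi$ is at most one, the series $\sum_{t=0}^\infty \gamma^t (P^\pi)^t$ converges and equals $(I - \gamma P^\pi)^{-1}$. Here I would note that a product of stochastic matrices is again stochastic, so every power $(P^\pi)^t$ is stochastic and hence non-expansive by the first step.

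The bound then follows by combining these observations with the triangle inequality and summing the resulting geometric series:
\begin{align*}
\bigl\|(I-\gamma P^\pi)^{-1} v\bigr\|_\infty
&= \Bigl\| \sum_{t=0}^\infty \gamma^t (P^\pi)^t v \Bigr\|_\infty
\le \sum_{t=0}^\infty \gamma^t \bigl\| (P^\pi)^t v \bigr\|_\infty \\
&\le \sum_{t=0}^\infty \gamma^t \|v\|_\infty
= \frac{\|v\|_\infty}{1-\gamma}.
\end{align*}

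There is no serious obstacle here: the only points needing a word of justification are the convergence of the Neumann series (which rests on $\gamma < 1$ together with $\|P^\pi\|_\infty = 1$) and the interchange of the norm with the infinite sum, both of which follow from absolute convergence coordinatewise. The heart of the argument is simply that discounting by $\gamma$ at each step against a non-expansive transition operator produces a geometric series summing to $1/(1-\gamma)$, so the $1/(1-\gamma)$ factor is exactly the accumulated effect of the effective horizon.
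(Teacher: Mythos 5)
Your proof is correct and is the standard argument: the paper itself does not reprove this lemma (it cites it directly from Agarwal et al.), and the cited source establishes it by exactly your route --- Neumann-series expansion of the resolvent, non-expansiveness of the row-stochastic powers $(P^\pi)^t$ in the $\ell_\infty$ norm, and summation of the geometric series. Nothing is missing.
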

\begin{lemma}[Lemma A.8 in \cite{agarwal2019reinforcement}]
\label{lemma:A8}
\textit{
Let $z$ be a discrete random variable that takes values in $\{1, \cdots, d\}$ distributed according to $q\in\Delta^d$ with $q_i = Pr(z=i)$. 
Assume we have $N$ iid samples, and the empirical estimate $\hat{q}\in\Delta^d$ with $\hat{q}_i = \sum_{i=1}^N {\bf 1}[z_i=i] / N$.
For some constant $c>0$ and w.p. at least $1-\delta$, we have
$$
\| q - \hat{q} \|_1 \le c \sqrt{\dfrac{d \log(1/\delta)}{N}}.
$$
}
\end{lemma}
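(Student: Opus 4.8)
The plan is to decompose the random quantity $\|q-\hat q\|_1$ into its expectation plus a concentration-around-the-mean term, and to bound each piece separately. First I would control the expectation. Writing $\|q-\hat q\|_1 = \sum_{i=1}^d |\hat q_i - q_i|$ and noting that $N\hat q_i$ is $\mathrm{Binomial}(N,q_i)$, Jensen's inequality gives $\mathbb{E}|\hat q_i - q_i| \le \sqrt{\mathrm{Var}(\hat q_i)} = \sqrt{q_i(1-q_i)/N} \le \sqrt{q_i/N}$. Summing over $i$ and applying Cauchy--Schwarz, $\sum_{i=1}^d \sqrt{q_i} \le \sqrt{d\sum_i q_i} = \sqrt d$, so that $\mathbb{E}\|q-\hat q\|_1 \le \sqrt{d/N}$.

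Second, I would establish concentration of $\|q-\hat q\|_1$ about its mean via the bounded-differences (McDiarmid) inequality. Viewing $\|q-\hat q\|_1$ as a function $f(z_1,\dots,z_N)$ of the iid samples, changing a single coordinate $z_j$ alters the empirical frequencies in at most two entries, each by $1/N$, so $f$ changes by at most $2/N$; hence $f$ satisfies the bounded-differences property with constant $2/N$. McDiarmid then yields $\mathbb{P}\big(\|q-\hat q\|_1 - \mathbb{E}\|q-\hat q\|_1 > t\big) \le \exp\!\big(-Nt^2/2\big)$, so with probability at least $1-\delta$ the deviation from the mean is at most $\sqrt{2\log(1/\delta)/N}$.

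Combining the two bounds, with probability at least $1-\delta$ we obtain $\|q-\hat q\|_1 \le \sqrt{d/N} + \sqrt{2\log(1/\delta)/N}$. The final step is to absorb both terms into the single product form claimed. Using $\sqrt a + \sqrt b \le \sqrt 2\,\sqrt{a+b}$ together with the inequality $d + 2\log(1/\delta) \le 2\,d\log(1/\delta)$, which holds whenever $d\ge 2$ and $\log(1/\delta)\ge 1$, I would conclude $\|q-\hat q\|_1 \le c\sqrt{d\log(1/\delta)/N}$ for an absolute constant $c$; the remaining small-$d$ or large-$\delta$ edge cases are handled by enlarging $c$.

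The technical work is all elementary; the only places requiring care are (i) verifying the $2/N$ bounded-difference constant, where one must check that resampling a single point moves exactly the coordinate it leaves and the coordinate it enters, each by $1/N$, and (ii) the constant-merging step, where forcing the multiplicative $\sqrt{d\log(1/\delta)}$ form (rather than the sharper additive $\sqrt{d+\log(1/\delta)}$ form the argument actually produces) requires the mild regime assumption absorbed into $c$. As a sanity check I would also note an alternative route that avoids McDiarmid: the variational identity $\|q-\hat q\|_1 = 2\max_{S\subseteq[d]}\big(q(S)-\hat q(S)\big)$ followed by Hoeffding for each fixed $S$ and a union bound over the $2^d$ subsets directly yields the $d$-in-the-log scaling and the same final rate.
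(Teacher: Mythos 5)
The paper never proves this lemma: it is quoted verbatim as Lemma A.8 of \cite{agarwal2019reinforcement} and used as a black box in the proof of Theorem 2, so there is no in-paper argument to compare yours against. Judged on its own, your proof is the standard self-contained derivation and its core is correct: the Jensen plus Cauchy--Schwarz step giving $\mathbb{E}\|q-\hat q\|_1 \le \sqrt{d/N}$ is right, the bounded-difference constant $2/N$ is right (resampling one point moves exactly the frequency it leaves and the frequency it enters, each by $1/N$), and McDiarmid then yields the tail $\exp(-Nt^2/2)$ as you state. Your ``sanity check'' route via the variational identity $\|q-\hat q\|_1 = 2\max_{S\subseteq[d]}(q(S)-\hat q(S))$, Hoeffding for each $S$, and a union bound over $2^d$ subsets is in fact the classical proof of this inequality (Weissman et al.\ style) and is the more common textbook argument; both routes produce the same additive bound $\sqrt{d/N}+O\bigl(\sqrt{\log(1/\delta)/N}\bigr)$, which is sharper than the product form in the statement.

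One claim in your final step is wrong as written: the large-$\delta$ edge case cannot be ``handled by enlarging $c$.'' As $\delta\to 1$ we have $\log(1/\delta)\to 0$, so $c\sqrt{d\log(1/\delta)/N}\to 0$ for every fixed $c$, while the quantity being bounded does not concentrate at $0$: for uniform $q$ with $d\ge 2$, the typical size of $\|q-\hat q\|_1$ is of order $\sqrt{d/N}$, so the event $\|q-\hat q\|_1 \ge \kappa\sqrt{d/N}$ has probability bounded away from zero for a small absolute $\kappa$, and the product form genuinely fails near $\delta=1$. The correct reading is that the lemma carries an implicit restriction such as $\delta\le 1/e$ (equivalently $\log(1/\delta)\ge 1$), which is exactly the regime your merging inequality $d+2\log(1/\delta)\le 2d\log(1/\delta)$ already requires; this is harmless here, since in the proof of Theorem 2 the lemma is invoked after a union bound with failure probability $\delta/|\mathcal{X}|$, i.e., in the small-$\delta$ regime. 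The $d=1$ case is trivial since then $q=\hat q$ identically. With that restriction made explicit, your proof is complete and correct.
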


\begin{proof}[Proof of Theorem 2]
For any policy $\pi$, we have
$$
\begin{aligned}
& \| Q^\pi - \hat{Q}^\pi \|_\infty \\
= & \| \gamma (I - \gamma \hat{P}^\pi)^{-1} (P - \hat{P}) V^\pi \|_\infty \\
\le & \frac{\gamma}{1-\gamma}  \| (P - \hat{P}) V^\pi \|_\infty \\
\le & \frac{\gamma}{1-\gamma} \Big( \max_{x,s,a} \| p(\cdot|x,s,a) - \hat{p} (\cdot|x,s,a) \|_1  \Big)  \| V^\pi \|_\infty \\
= & \frac{\gamma}{1-\gamma} \Big( \max_{x} \| p_x(\cdot|x) - \hat{p}_x (\cdot|x) \|_1  \Big)  \| V^\pi \|_\infty \\
\le & \frac{\gamma}{(1-\gamma)^2} \Big( \max_{x} \| p_x(\cdot|x) - \hat{p}_x (\cdot|x) \|_1  \Big) \\
\le & c \frac{\gamma}{(1-\gamma)^2} \sqrt{\dfrac{|\mathcal{X}| \log(|\mathcal{X}|/\delta)}{N}} \quad \text{w.p. at least } 1-\delta. \\
\end{aligned}
$$
The first line uses Lemma \ref{lemma:22}. 
The second line uses Lemma \ref{lemma:23}.
The third line uses Holder's inequality.
The fourth line uses Lemma \ref{lemma:link}.
The fifth line uses Assumption 1.
The sixth line uses Lemma \ref{lemma:A8} with a union bound.
Equivalently, with $|\mathcal{X}| N = \dfrac{c^2 \gamma^2}{(1-\gamma)^4} \dfrac{|\mathcal{X}|^2 \log(|\mathcal{X}|/\delta)}{\epsilon^2}$ samples, we have $\| Q^\pi - \hat{Q}^\pi \|_\infty \le \epsilon$ w.p. at least $1-\delta$.

Then, for all $(x,s,a)\in\mathcal{X}\times\mathcal{S}\times\mathcal{A}$, we have
$$
\begin{aligned}
& |Q^*(x,s,a) - \hat{Q}^*(x,s,a)| \\
= & |\sup_\pi Q^\pi(x,s,a) - \sup_\pi \hat{Q}^\pi(x,s,a) | \\
\le & \sup_\pi |Q^\pi(x,s,a) - \hat{Q}^\pi(x,s,a)| \\
\le & \epsilon,
\end{aligned}
$$
which completes the proof.
\end{proof}

\textbf{An illustrative example.}
In addition to the proof, we also provide an illustrative view on how context aggregation can help generalization.
We consider a policy evaluation setting where we aim to estimate the value function for some context.
We will later show that, by aggregating the context properly, we can obtain a more accurate estimate of the value.

\begin{figure*}
    \centering
    \includegraphics[width=1.4\columnwidth]{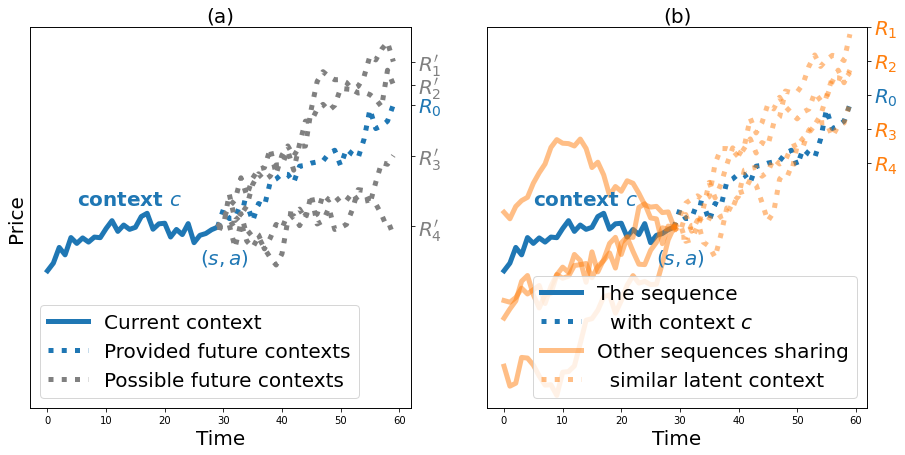}
    \caption{
    a) When a resettable environment is available, we can estimate $Q(c,s,a)$ accurately with the returns calculated based on fictitious context sequences (the gray lines).
    However, in real scenarios where fictitious context sequences cannot be obtained, we can only use the expected return calculated on few training sequences (the blue line) as the estimate which suffers from a large variance.
    b) When resorting to a mapping between the high-dimensional context and the latent context, we can obtain a better estimate by averaging over the returns calculated based on similar sequences (the orange lines). 
    }
    \label{fig:theorem2_illustration}
\end{figure*}

Let us consider to estimate a value function $Q(c,s,a)$ for some $(c, s, a)\in\mathcal{C}\times\mathcal{S}\times\mathcal{A}$.
We illustrate the procedure under the simplified trade execution task (described in Section 5.1) in Figure \ref{fig:theorem2_illustration}.

When it is possible to reset the environment to $(c,s)$ and then take the action $a$ to rerun unlimited number of times, an accurate estimation of $Q(c,s,a)$ is $\hat{Q}'(c,s,a) = \frac{1}{N} (R'_1 + R'_2 + R'_3 + R'_4 + \cdots)$, where $R'_i$ is the cumulative return from $(c,s,a)$ in the $i$-th run (cf. the gray dashed lines in Figure \ref{fig:theorem2_illustration}a) and $N$ is the total number of runs.
When the number of samples is sufficient, $\hat{Q}'(c,s,a)$ approaches the true value $Q(c,s,a)$.

However, a resettable environment is not provided in most realistic scenarios, and therefore it is hard to obtain $R'$s.
For example, in the trade execution task, it is hard to reset to a specific market context and perform a counterfactual simulation on the complex market dynamics.
Without context aggregation (via either implicit function approximation or explicit context encoder), a straightforward way to estimate $Q(c,s,a)$ is to use the sample mean.
Suppose the context $c$ only appears once in the dataset (which is quite common when the context is high-dimensional).
In this scenario, the best estimate should be $\hat{Q}_1(c,s,a) = R_0$, where $R_0$ is the expected cumulative return calculated based on the only future context sequence (cf. the blue dashed lines in Figure \ref{fig:theorem2_illustration}a).
Notice that $\hat{Q}'(c,s,a)$ is calculated based on multiple context sequences but $\hat{Q}_1(c,s,a)$ is based on only one context sequence provided in the dataset.
Accordingly, $\hat{Q}_1(c,s,a)$ can suffer from a large variance.

An alternative method is to resort to the mapping between the high-dimensional context and the compact latent context.
When such a mapping is known, we can select the contexts that share a similar latent context to the interested context $c$ in the dataset (cf. the orange lines in Figure \ref{fig:theorem2_illustration}b).
Then, we can estimate the value function using $\hat{Q}_2(c,s,a) = \frac{1}{M} (R_0 + R_1 + R_2 + R_3 + R_4 + \cdots)$.
This estimate is the average of the returns collected based on multiple context sequences.
If these context sequences resemble the fictitious sequences starting from $(c,s,a)$ (cf. the gray dashed lines in Figure \ref{fig:theorem2_illustration}a),
we can consider $R_i\approx R'_i$ and therefore
$\hat{Q}_2(c,s,a)$ is an accurate estimate with reduced variance compared with $\hat{Q}_1$.

\begin{table*}[t]
    \centering
    \begin{tabular}{p{2cm}lrp{5cm}}
        \toprule
        Group & Variable & Trading Cost &  Explanation \\
        \midrule
        Revenue term &  \textbf{Cash inflow} & \textbf{-2.6455} & $r_1=n_t \bar{p}_t$ \\
        in reward & Price advantage & -2.5276 &  $n_t (\bar{p}_t - \bar{p}_\text{TWAP})$ \cite{fang2021universal} \\
        & Change in unrealized PnL & +2.6438 & See \cite{ning2018double} \\
        % Change in unrealized PnL & & $q_t \Delta p_t$ where $q$ is remaining inventory, $p$ is price 
        & Sparse reward & +2.5293 &  See \cite{lin2020end}\\
        \midrule
        Action space & Discrete volumes & +3.6614 & 21 discrete quoted volumes for MOs\textsuperscript{a} \\
        & Discrete prices & -0.1371 & 33 discrete quoted prices (relative)\textsuperscript{b} \\
        & \textbf{Discrete prices \& volumes} & \textbf{-3.5243} & 33$\times$4 discrete quoted price$\times$volume\textsuperscript{c} \\
        TWAP & $\beta=0.0$ & -0.0821 & The coefficient of the reg. term $r_2$ in reward\\
        regularization & $\mathbf{\beta=0.1}$ & \textbf{-1.7325} & \\
        & $\beta=1.0$ & +1.8146 & \\
        Model & Standard & +0.4229 & 6-layer MLP with 128 hidden neurons\\
        architecture & \textbf{Large} & \textbf{-0.4229} & 8-layer MPS with 256 hidden neurons\\
        Learning rate & $1\times 10^{-4}$ & +0.0347 & Learning rate used in training \\
        & $\mathbf{1\times 10^{-5}}$ & \textbf{-0.0347} & \\
        Market & LOB & +0.1526 & 5-level ask/bid volume/price\\
        variables& \textbf{LOB + factors} & \textbf{-0.1526} & + other LOB-based and technical factors\\
        \midrule
        Training mode & \textbf{Separate} & \textbf{-0.3154} & Train encoder, fix encoder, train policy network\\
        in Algo1 & Joint & +0.3154 & Train encoder and policy network simultaneously \\
        \bottomrule
    \end{tabular}

    \caption{
    Ablation study on different model designs for trade execution. 
    }
    
    \label{tab:ablation}
\end{table*}
\section{Simulator}
\label{app:execution_simulator}
\textbf{Dataset.}
Compared with the simulators based on the preset stochastic process \cite{daberius2019deep} or a collection of preset interactive agents \cite{byrd2019abides}, the simulator driven by real market data can capture the complex market more accurately \cite{vyetrenko2020get}.
Moreover, compared with the previous work where the simulator is based on bar-level data, our simulator relies on the LOB-level data of the market which records an LOB snapshot every 3 seconds.
With finer-grained data, we are able to learn more practical trading agents.
For example, we can evaluate how an agent that trades using only MOs suffers from a large trading cost, which is the scheme adopted in many existing papers \cite{fang2021universal}.
The interested time period of trade execution tasks in the industry is typically from 10 to 120 minutes, which is configurable in our simulator.
% For higher fidelity, we extract information from the simulation is carried our snapshot by snapshot
% To avoid a long planning horizon, the agent can interact with our simulator in a lower frequency (one minute per step in our experiment), but the simulation is carried out snapshot by snapshot for accuracy.
Our simulator is based on the dataset that records an LOB snapshot every 3 seconds from the real market.
The time period of trade execution tasks in our experiments is set to 30 minutes.
To avoid a long planning horizon, the agent interacts with the simulator at a lower frequency (i.e., one minute per step).
Nevertheless, simulation is carried out snapshot by snapshot for higher accuracy.

\textbf{Observations.}
The observation in our simulator consists of the private variable (i.e., the state) and the market variable (i.e., the context).
The private variable consists of the remaining time and executed quantity.
The market variable can be the stacked features (including order-book-related features, technical indicators, raw snapshots, etc.) over several past steps. 
Our simulator implements a wide range of features including the features that appear in the previous papers as far as we know.
For different designs on the observations space, the agent can choose from these features.
To eliminate the differences in the features on different stocks,
the simulator normalizes the features as follows:
The price (or the feature whose dimension is price) is normalized using z-score with the open price on that trading day as the mean and the volatility on the previous trading day as the standard deviation.
The volume (or the feature whose dimension is volume) is normalized by dividing by the total volume of the last trading day.
In specific algorithms, we may perform another normalization on these features to fit them into a proper value range.

% The observation in our simulator consists of the market variables (context) and the private variables (state). 
% The market variables includes order-book related features, technical indicators, and are stacked over several past steps.
% Our simulator implements a wide range of features that appear in previous papers to choose from.
% The private variables consist of the remaining time and executed quantity.
% \blue{Normalization.}

% encode the information of the market (such as the price and volume) and the private variables include the remaining time and executed quantity.
% Different methods use different variables as the market variables such as order-book related features \cite{nevmyvaka2006reinforcement}, technical indicators \cite{ning2018double} or a stack of features of several past steps \cite{fang2021universal}.
% Our simulator implements a wide range of hand-crafted features and can easily generate different forms of observation variables through a configuration file.
% all the state variables in the previous papers as far as we know.

\textbf{Actions.}
On each time step, our simulator receives a list of orders, each of which can be an MO or LO that specifies the direction, the quantity, and the price (only for LO).
On top of this, we provide a series of wrappers to fit different designs on the action space (e.g., discrete/continuous/combinatorial action spaces).
% If the price is lower than the best bid price, the order will be executed as an MO. 
Our simulator will provide the best possible execution for each order.
For example, if the quoted price is lower than some bid price level, the simulator will automatically place an MO to fill the outstanding bid orders whose prices are higher than the quoted price, and an LO for the remaining quantity.
In our algorithms, the agent places an order on each step by choosing a quoted volume and a quoted price from discretized sets.
The quoted volume is selected from $\{\frac{1}{2}\text{TWAP}, \text{TWAP}, \frac{3}{2}\text{TWAP}, 2 \text{TWAP} \}$ where TWAP is the volume executed on each step by a TWAP strategy (i.e., selling an equal amount on each step).
The quoted price is specified by a price difference w.r.t. the best ask price. 
If the quoted price is lower than the best bid price, the agent actually places an MO; otherwise, it is an LO.
Outstanding orders at the end of each step will be withdrawn.
% However, different models use different action spaces such as placing orders with discrete prices with fixed quantity, only MOs with varying quantities, or even a combination of LOs and MOs \cite{karpe2020multi}.
% In our simulator, we provide a series of wrappers to fit different designs on the action space.

\textbf{Reward.}
The reward function may consist of a basic revenue term (e.g., negative trading cost or average execution price) and several regularization terms (e.g., approximating the permanent market impact or enforcing a TWAP-like strategy).
The revenue term reflects the overall objective of trade execution that minimizes the trading cost or the average execution price for a sell program.
Moreover, various of regularizers are adopted to model the permanent market impact or the prior knowledge of a good execution program (e.g., enforcing a TWAP-like program).
Our simulator provides various choices on the reward function design and can benchmark different designs with a uniform set of metrics (such as the trading cost or implementation shortfall \cite{perold1988implementation}).
In our algorithms, the reward function consists of a revenue term $r_1$ and a regularization term $r_2$, i.e., $r_t= r_1 + \beta r_2$ where $\beta$ is a coefficient.
The revenue term is $r_1 = n_t \bar{p}_t$ where
$n_t$ is the executed volume in the last step, and $\bar{p}_t$ is the corresponding average execution price.
The regularization term is $r_2 =  (v_t - v_{t,\text{TWAP}})^2$ where $v_t$ is the remaining inventory, and $v_{t,\text{TWAP}}$ is the remaining inventory if we follow the TWAP strategy.

% Although different algorithms use different reward functions, they are evaluated using a uniform set of  metrics.
% The basic reward function reflects the overall objective of trade execution which minimizes the trading cost or the average execution price for a sell program.
% Moreover, various of regularizers are adopted to model permanent market impact or the prior knowledge of a good execution program (e.g., enforcing a TWAP-like program).
% Our simulator provides various choices on the reward function design and can benchmark different design with a uniform set of metrics (such as trading cost or implementation shortfall \cite{perold1988implementation}).

\begin{figure*}[t]
     \centering
     \begin{subfigure}
         \centering
         \includegraphics[width=\textwidth]{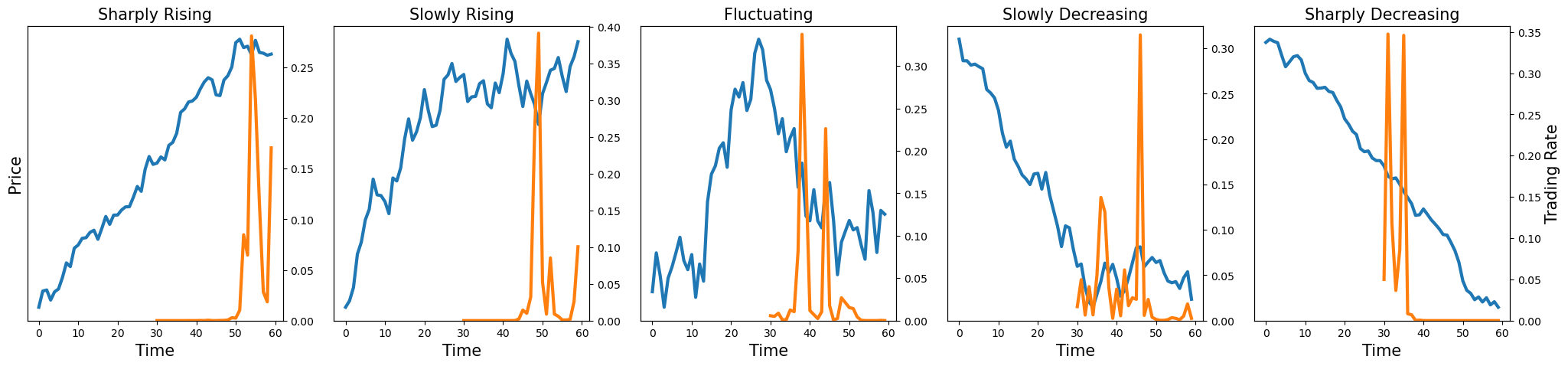}
     \end{subfigure}
     \hfill
     \begin{subfigure}
         \centering
         \includegraphics[width=\textwidth]{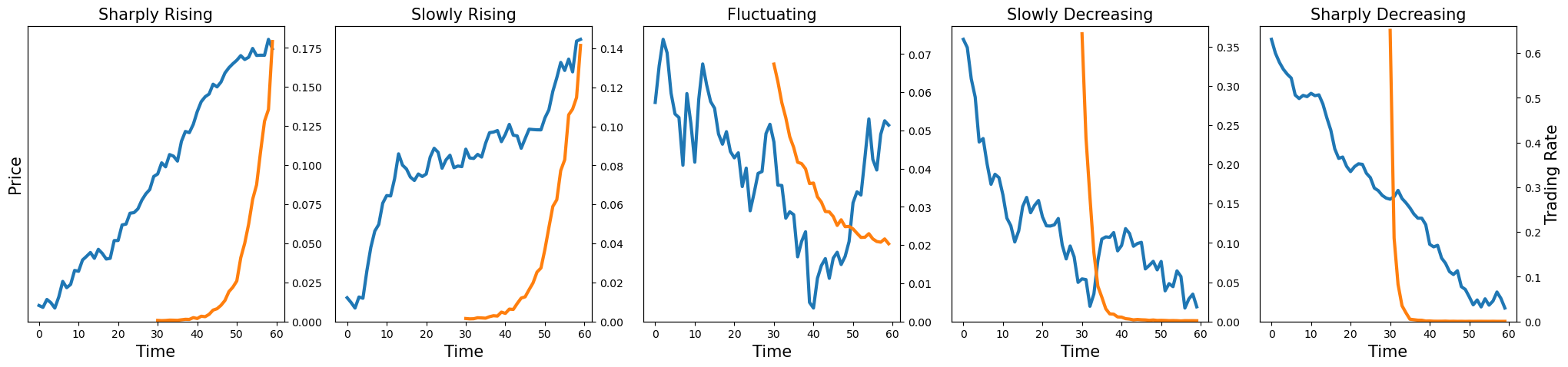}
     \end{subfigure}
     \caption{The static strategy learned by ``RL'' (top) and ``Fit parameters'' (bottom). The blue lines are the prices and the orange lines are the trading rates determined by the strategies. The price sequences in the five figures from left to right are generated using $\sigma=1.5$ and $\alpha=1.0, 0.5, 0.0, -0.5, -1.0$ respectively.}
     \label{fig:toy}
\end{figure*}

\textbf{Transition dynamics.}
Given a list of orders on the $t$-th time step, our simulator will determine the reward and the state on the next time step.
For MOs, we consider the temporary market impact and the time delay.
For example, when the decision of the agent is based on observation generated on time $\tau$, the execution of an MO is based on the snapshot on time $\tau+\Delta\tau$ where $\Delta\tau$ is a preset time delay.
Typically, we use $\Delta\tau=3s$.
For LOs, we determine whether the order can be executed snapshot by snapshot till the ($t+1$)-th time step.
If the highest market price (i.e., a transaction occurs on this price) in one snapshot exceeds the price quoted in the LO, we consider the order is fully executed.
If the highest market price exactly equals the quoted price, the order may be partially filled and the ratio is calculated by reconstructing the transactions between snapshots.
However, considering that 1) the quantity may be too large to be fully executed or 2) the LO may be at the end of the queue of the quoted price level, we impose additional trading limits on the above matching mechanism to encourage conservative simulation.

\textbf{Discussion.}
To improve fidelity, our simulator considers the temporary market impact of MOs, the time delay, and determines the execution of LOs based on reconstructing the transactions between snapshots.
However, there are still components that we do not consider.
First, the permanent market impact is the change of the equilibrium price during at least our planning horizon when we place an order.
Here, we assume the permanent market impact is linear w.r.t. the order quantity and therefore considering this factor does not change the optimal solution of our strategy \cite{almgren2001optimal}.
Then, MOs in our simulation not only change the current LOB but also the LOB of the next time step, possibly resulting in degenerated fidelity.
Therefore, we also rely on the assumption that the limit orders are resilient within a short period of time (which should be smaller or comparable to the time interval between two simulation steps).
Fortunately, this is verified by empirical studies such as \cite{degryse2005aggressive,cummings2010further,gomber2015liquidity}.

% \begin{table}[h]
%     \scriptsize
%     \begin{tabular}{ r r r r}
%     Algorithm & Training & Testing & Gap \\
%     \hline
%     tuned DQN (2 months) & 1.6078 (2.1974) & 6.7199 (2.7776) & 5.1121 \\ 
%     tuned DQN (6 months) & 2.0545 (2.8511) & 6.1737 (2.3555) & 4.1192 \\ 
%     CATE+DQN (2 months) & -2.7526 (1.3319) & 0.0749 (1.5233) & 2.8275 \\ 
%     CATE+DQN (6 months) & -3.0444 (1.8510) & 0.7940 (0.5523) & 3.8384 \\ 
%     \end{tabular}
%     \caption{Performance of the RL agents trained using more data.}
%     \label{tab:more_data}
% \end{table}

\begin{table*}[th] 
    \centering
    \begin{tabular}{lrr}
        \hline
        Algorithm & Testing & Gap \\
        \hline
        TWAP & 13.1217 (2.0858) & - \\
        \hline
        Tuned DQN & 6.7199 (2.7776) & 5.1121 \\
        \quad Nevmyvaka et al. & 9.7045 (1.2583) & 6.8319 \\
        \quad Ning et al. & 10.0825 (2.1145) & 2.8475 \\
        \quad Lin \& Beling & 12.4054 (1.3553) & 6.3749 \\
        \quad Tuned DQN + {CASH} & 3.9184 (1.2287) & \textbf{1.9833} \\
        \quad Tuned DQN + {CATE} & \textbf{0.0749} (1.5233) & 2.8275 \\
        \hline
        Tuned PPO & 3.2686 (0.5302) & 5.0026 \\
        \quad Dabérius et al. & 12.9987 (2.5579) & 6.2316\\
        \quad Lin \& Beling & 10.2599 (1.5293) & 4.4571 \\
        \quad Fang et al. & 11.5612 (5.3556) & 17.0537 \\
        \quad Tuned PPO + {CASH} & -3.8524 (0.2234) & \textbf{0.8558} \\
        \quad Tuned PPO + {CATE} & \textbf{-4.3000} (0.4813) & 0.8725 \\
        \hline
    \end{tabular}
    \caption{
    The trading cost (bp=$10^{-4}$) of different algorithms. 
    The numbers are the average mean (std.) trading cost in the last 100 evaluations of the total 1000 evaluations over five different random seeds.
    }
    \label{tab:other_periods}
\end{table*}

\begin{table*}[th]
    \centering
    \begin{tabular}{llllll}
    \hline
    Algorithm            & Training & Testing & Gap    \\ \hline
    Tuned DQN (3 months) & 2.0382 (1.7684)  & 5.9240 (3.2986)  & 3.8858 \\
    Tuned DQN (6 months) & 2.6027 (2.7152)  & 6.2365 (2.4883)  & 3.6338 \\
    CATE+DQN (3 months)  & -2.8774 (1.7019) & 0.0075 (1.6920)  & 2.8849 \\
    CATE+DQN (6 months)  & -3.1773 (1.6459) & -0.8044 (0.5289) & 2.3729 \\ \hline
    \end{tabular}
    \caption{Performance of the RL agents trained using more data.}
    \label{tab:more_data}
\end{table*}

\begin{table*}[th] 
    \centering
    \begin{tabular}{lrr}
    \hline
    Algorithm        & Testing & Gap \\ \hline
    Tuned DQN + PCA-CASH      & 5.3248 (2.9694)  & 1.8125  \\ % Training 3.5123 (2.1750)
    Tuned PPO + PCA-CASH      & 0.9382 (1.7338)  & 1.6602 \\ % Training -0.7220 (1.0177)
    \hline
    Tuned PPO + CASH (Ours)  &  -4.5760 (0.2062)  & \textbf{0.0744}
    \\
    Tuned PPO + CATE (Ours)  & \textbf{-4.9068 (0.3015)}  &  0.1296  \\ 
    \hline
    \end{tabular}
    \caption{The results of new baselines conducted under the same settings as Experiment 5.2.}
    \label{tab:other_baselines}
\end{table*}

\begin{table*}[th] 
    \centering
    \begin{tabular}{lcc}
    \hline
    Algorithm &  Daily return & Daily Sharpe        \\ 
    \hline
    Tuned DQN & -4.3380 (1.1141) & -- \\
    Tuned DQN + CASH & 1.1587 (0.9136) & 2.5794 (0.7398) \\
    Tuned DQN + CATE & \textbf{1.3619 (0.9721)} & \textbf{2.5849 (0.7593)} \\
    \hline
    \end{tabular}
    \label{tab:crypto}
    \caption{
    The performance of our algorithms in the trading task for quarterly BTC (cryptocurrency) features in OKEX (exchange).
    }
\end{table*}

\section{Further Experiment Details.}
\label{app:execution_experiments}

\textbf{Toy trade execution task.}
The input of the agent is a vector consists of past 30 steps prices changes. The output of the agent is a vector indicating a strategy that liquidates the inventory in the future 30 steps $a=(a_{31}, \cdots, a_{60})$ with $\sum_{t=31}^{60} a_t = 1$. 
The reward in the toy trade execution task is $r=(\sum_{t=31}^{60} \gamma^{t-31} a_t p_t) - (\frac{1}{30} \sum_{t=31}^{60} \gamma^{t-31} p_t)$, where $a_t$ is the action, i.e., selling a proportion of $a_t$ inventory on the $t$-th time step, $p_t$ is the price, and we set $\gamma=\exp(\log(\frac{1}{2}) / 30)= 0.9772$.
We present more results on the learned strategy by \emph{Base} and \emph{CASH} for the toy trade execution task to illustrate how an overfitted/generalizable policy should perform.
We present their performance under different market environments in Figure \ref{fig:toy}.
In general, when the price is rising (see the left two columns in the figure), a good strategy should sell more at the end of the horizon; otherwise (see the right two columns) a good strategy needs to liquidate as soon as possible. 
Moreover, \emph{Base} presents sharp peaks which indicates that it suffers from overfitting the training data while \emph{CASH} liquidates the inventory smoothly and resembles the analytical solution in previous papers \cite{almgren2001optimal}.

\textbf{Ablation study on the model designs for trade execution.}
We conduct experiments on the combination of the designs on the observation space, reward function, learning rate, model architecture, etc.
We perform a grid search over all the possible combinations and list the corresponding performance impacts in Table \ref{tab:ablation}.
We can observe that the most influential factor is the design in the reward function.
Tuned DQN/PPO and Algo1/Algo2 use the design that is the best in each group.

We conduct a grid search over all possible designs listed in the table and run each combination five times. 
We show the differences between the average trading cost when the model adopts the design and the average trading cost over the runs in the whole set.
For the first group, the whole set contains all runs for tuned PPO.
For the second to the fifth group, the whole set only contains runs for tuned PPO that use \emph{cash inflow} as the revenue term.
For the last group, the whole set contains all runs for Algo1 (PPO).
\begin{itemize}
\item Equally distributed quoted volumes ranges from 0 to 2TWAP as is in \cite{daberius2019deep}.
\item The quoted price is selected from a non-uniformly distributed set $P$=[[-50, -40, -30, -25, -20, -15], linspace(-10, 10, 21), [15, 20, 25, 30, 40, 50]] bp.
\item The quoted price and volume is selected from $[\frac{1}{2}\text{TWAP}, \text{TWAP}, \frac{3}{2}\text{TWAP}, 2 \text{TWAP}] \times P$.
\end{itemize}

\textbf{The experiment results at other time periods.}
To demonstrate the generalization of the model, we also conducted experiments at different time period. As shown in Table~\ref{tab:other_periods}, these results obtained on the same experiments settings as those in Section 5.2 but using data from another time period in 2021 (training: June 2021; validation: July 2021; testing: August 2021).

\textbf{Increasing the data volume.}
We argue that simply increasing the data volume does not address the overfitting problem since the RL agents are trained in the low-data region in practice.
Although there is a large amount of historical data available in finance, the high noise and high dimensionality of the context (i.e., the indicators) calls for far more samples than we have.
Moreover, in real trading, although using far data augments the dataset, this may also induce the distribution shift due to the fast changing market environment.
To validate our claim, we conduct new experiments using more data (6 months $\approx 3$M samples) and compare them with our main experiment. 
We present the result in Table \ref{tab:more_data}.
We observe that simply using more data does not alleviate overfitting and our algorithm achieves better performance when more data is consumed.

\textbf{Compare with existing context aggregation method.} 
To demonstrate the effectiveness of our algorithm in context aggregation, we designed a variant of \emph{CASH}, called \emph{PCA-CASH}, which utilizes principal component analysis (with \#components=8, same as \#statistics of \emph{CASH}) as the context aggregator. The experimental results are shown in Table \ref{tab:other_baselines} , we observed that our aggregation method demonstrates better performance.

\textbf{The experiment results in other markets.} 
We also conduct the experiment in the cryptocurrency market. The task is to decide to long or short for every 15 minutes. The reward is the profit gained during one time step. The evaluation metrics are average daily return (defined as the sum of profit in an average day assuming the initial asset is 1) and daily Sharpe (defined as the mean of daily returns divided by the standard deviation of daily returns). The transaction fee is set to be 0.03\%. In CASH, we hand-craft 7 statistics for the encoder to predict (on the future price trend, volatility, trading amount, etc.). In CATE, we train the encoder to generate 8-dimensional embeddings. The training data is from 2021-02-12 to 2021-08-13; the validation data is from 2021-08-14 to 2021-09-10; and the testing data is from 2021-09-10 to 2021-10-15. The numbers are the mean and standard deviation based on evaluations over 6 random seeds and the last 5 logged models during the training. We only conduct experiments on DQN due to the binary action space in this setting.

\textbf{Hardware.}
Our experiments are conducted on a server with the following configurations:
\begin{itemize}
    \item System: Ubuntu 18.04.5 LTS
    \item CPU: 24 $\times$ Intel(R) Xeon(R) CPU E5-2690 v4 @ 2.60GHz
    \item GPU: 4 $\times$ Tesla V100
    \item Memory: 441G
\end{itemize} 

\textbf{Limitation and social impact of our work.}
Generalization in trade execution or other tasks in quantitative investment is important. This paper only studies on how to improve generalization from the representation learning perspective. There are other aspects that are worth investing such as feature engineering, neural architecture design, and traditional deep learning techniques that can prevent overfitting.
Trade execution is an important application in the financial industry.
With a good trade execution strategy, the firms can allocate the resources more efficiently.

% We argue that overfitting in trade execution could happen ``robustly`` even if we use a reasonably large offline dataset.
% Notice that overfitting happens in our main experiments (cf. Table \ref{tab:main_exp}) where the trading frequency is high and the training already consumes a large amount of data (with approximately 1 million samples).
% To compare with

% We conduct new experiments using more data (6 months $\approx 3$M samples) and present the results as follows. We can see that simply using more data does not alleviate overfitting. 
% 股票数量特别多，高频的数据，数据量已经很大了。datasize 超出了 daily based 的二十年的数据。

\section{Pseudo Code}
\label{app:algo}

In this section, we provide the pseudo code of CASH and CATE as supplementary to Figure~\ref{fig:algorithm}. We also provide the pseudo code for our simulator for better understanding the ORDC model illustrated in Figure~\ref{fig:ORDC}.

\begin{algorithm*}[th]
    \caption{CASH: Context Aggregation with Handcrafted Statistics}
    \label{algo:CASH_full}
    \begin{algorithmic}[1]
    \STATE Initialize context encoder $\phi_{\theta}(\cdot)$, PolicyNet/ValueNet $\text{RL}_{w}(\cdot,\cdot)$;
    \STATE Given hand-crafted statistics function $\phi_{\text{hand}}(\cdot)$, Simulated Environment $E$;
    
    \STATE \textit{\# Pre-train the context encoder}
    \FOR{$i = 1 $ \textbf{to} $ n$}
        \STATE Sample minibatch $\{(c_{j}, {c'}_{j})\}$ from simulator $E$ with arbitrary actions;
        \STATE Using gradient descent to minimize the loss $ \mathcal{L}_\text{stat}(\theta)= \sum_{j} (\phi_{\theta}(c_{j}) - \phi_{\text{hand}}({c'}_{j}))^{2}$;
    \ENDFOR
    \STATE \textit{\# Train the RL model}
    \STATE Fix the parameters $\theta$ in the context encoder $\phi_{\theta}(\cdot)$;
    \FOR{$i = 1 $ \textbf{to} $m$}
        \STATE Sample minibatch $\{(c_{j}, {s}_{j}, r_{j})\}$ from simulator $E$ by rolling the current policy;
        \STATE Train $\text{RL}_{w}(\phi_{\theta}(c_{j}), s_{j})$ by minimizing loss $\mathcal{L}(w)=\sum_{j} \mathcal{L}_\text{RL}(w)$
    \ENDFOR
    \end{algorithmic}
\end{algorithm*}

\begin{algorithm*}[th]
    \small
    \caption{CATE}
    \label{algo:CATE_full}
    \begin{algorithmic}[1]
    \STATE Initialize context encoder $\phi_{\theta}(\cdot)$, future encoder $\phi_{\vartheta}(\cdot)$, PolicyNet/ValueNet $\text{RL}_{w}(\cdot,\cdot)$;
    \STATE Given simulator $E$;
    
    \STATE \textit{\# Train model simultaneously in an end-to-end manner}
    \FOR{$i = 1 $ \textbf{to} $ m$}
        \STATE Sample minibatch $\{(c_{j}, {c}_{j}')\}$ from simulator $E$ by rolling the current policy;
        \STATE Using gradient descent to minimize the loss 
        \[
        \mathcal{L}(\theta, \vartheta, w):= \sum_{j} \mathcal{L}_{\text{Enc}}(\theta, \vartheta) + \mathcal{L}_{\text{RL1}}(\theta, w) + \mathcal{L}_{\text{RL2}}(\vartheta, w)
        \]
    \ENDFOR
    \end{algorithmic}
\end{algorithm*}

\begin{algorithm*}[th]
    \caption{Simulator}
    \label{algo:simulation}
    \begin{algorithmic}[1]
    \STATE \textbf{Input:} Policy $\pi$, 
    \STATE \textbf{Given:} Historical dataset $D$, transition dynamics $P(s'|c,s,a)$, and reward function $r(c,s,a)$, execution period $T$, and execution volume $V$;
    \STATE Pick a context trajectory $\{c_1, c_2, \cdots, c_T\}$ from $D$;
    \STATE Initialize $s_1$ and output $(c_1,s_1)$
    \FOR{$t = 1 $ \textbf{to} $ T$}
        \STATE Receive an action $a_t$ from $\pi$;
        \STATE Calculate $s_{t+1} \sim P(\cdot|c_t, s_t, a_t)$ and $r_{t} = r(c_t, s_t, a_t)$;
        \STATE Return $(r_t, c_{t+1}, s_{t+1})$;
    \ENDFOR
    \end{algorithmic}
\end{algorithm*}

\clearpage
\bibliographystyle{named}
\bibliography{reference}

\begin{thebibliography}{}

\bibitem[\protect\citeauthoryear{Agarwal \bgroup \em et al.\egroup
  }{2019}]{agarwal2019reinforcement}
Alekh Agarwal, Nan Jiang, Sham~M Kakade, and Wen Sun.
\newblock {\em {Reinforcement Learning: Theory and Algorithms}}.
\newblock CS Department, UW Seattle, Seattle, WA, USA, 2019.

\bibitem[\protect\citeauthoryear{Almgren and Chriss}{1999}]{almgren1999value}
Robert Almgren and Neil Chriss.
\newblock Value under liquidation.
\newblock {\em Risk}, 12(12):61--63, 1999.

\bibitem[\protect\citeauthoryear{Almgren and Chriss}{2001}]{almgren2001optimal}
Robert Almgren and Neil Chriss.
\newblock Optimal execution of portfolio transactions.
\newblock {\em Journal of Risk}, 3:5--40, 2001.

\bibitem[\protect\citeauthoryear{Arjovsky \bgroup \em et al.\egroup
  }{2019}]{arjovsky2019invariant}
Martin Arjovsky, L{\'e}on Bottou, Ishaan Gulrajani, and David Lopez-Paz.
\newblock Invariant risk minimization.
\newblock {\em arXiv preprint arXiv:1907.02893}, 2019.

\bibitem[\protect\citeauthoryear{Azar \bgroup \em et al.\egroup
  }{2013}]{azar2013minimax}
Mohammad~Gheshlaghi Azar, R{\'e}mi Munos, and Hilbert~J Kappen.
\newblock {Minimax PAC bounds on the sample complexity of reinforcement
  learning with a generative model}.
\newblock {\em Machine Learning}, 91(3):325--349, 2013.

\bibitem[\protect\citeauthoryear{Bulthuis \bgroup \em et al.\egroup
  }{2017}]{bulthuis2017optimal}
Brian Bulthuis, Julio Concha, Tim Leung, and Brian Ward.
\newblock Optimal execution of limit and market orders with trade director,
  speed limiter, and fill uncertainty.
\newblock {\em International Journal of Financial Engineering}, 4(1):175--200,
  2017.

\bibitem[\protect\citeauthoryear{Byrd \bgroup \em et al.\egroup
  }{2019}]{byrd2019abides}
David Byrd, Maria Hybinette, and Tucker~Hybinette Balch.
\newblock {ABIDES: Towards high-fidelity market simulation for AI research}.
\newblock {\em arXiv preprint arXiv:1904.12066}, 2019.

\bibitem[\protect\citeauthoryear{Cobbe \bgroup \em et al.\egroup
  }{2019}]{cobbe2019quantifying}
Karl Cobbe, Oleg Klimov, Chris Hesse, Taehoon Kim, and John Schulman.
\newblock Quantifying generalization in reinforcement learning.
\newblock In {\em International Conference on Machine Learning}, pages
  1282--1289. PMLR, 2019.

\bibitem[\protect\citeauthoryear{Cobbe \bgroup \em et al.\egroup
  }{2020}]{cobbe2020leveraging}
Karl Cobbe, Chris Hesse, Jacob Hilton, and John Schulman.
\newblock Leveraging procedural generation to benchmark reinforcement learning.
\newblock In {\em International conference on machine learning}, pages
  2048--2056. PMLR, 2020.

\bibitem[\protect\citeauthoryear{Cummings and
  Frino}{2010}]{cummings2010further}
James~Richard Cummings and Alex Frino.
\newblock Further analysis of the speed of response to large trades in interest
  rate futures.
\newblock {\em Journal of Futures Markets: Futures, Options, and Other
  Derivative Products}, 30(8):705--724, 2010.

\bibitem[\protect\citeauthoryear{Dab{\'e}rius \bgroup \em et al.\egroup
  }{2019}]{daberius2019deep}
Kevin Dab{\'e}rius, Elvin Granat, and Patrik Karlsson.
\newblock Deep execution-value and policy based reinforcement learning for
  trading and beating market benchmarks.
\newblock {\em Available at SSRN 3374766}, 2019.

\bibitem[\protect\citeauthoryear{Degryse \bgroup \em et al.\egroup
  }{2005}]{degryse2005aggressive}
Hans Degryse, Frank~De Jong, Maarten~Van Ravenswaaij, and Gunther Wuyts.
\newblock Aggressive orders and the resiliency of a limit order market.
\newblock {\em Review of Finance}, 9(2):201--242, 2005.

\bibitem[\protect\citeauthoryear{Dietterich \bgroup \em et al.\egroup
  }{2018}]{dietterich2018discovering}
Thomas Dietterich, George Trimponias, and Zhitang Chen.
\newblock Discovering and removing exogenous state variables and rewards for
  reinforcement learning.
\newblock In {\em International Conference on Machine Learning}, pages
  1262--1270. PMLR, 2018.

\bibitem[\protect\citeauthoryear{Du \bgroup \em et al.\egroup
  }{2019}]{du2019provably}
Simon Du, Akshay Krishnamurthy, Nan Jiang, Alekh Agarwal, Miroslav Dudik, and
  John Langford.
\newblock {Provably efficient RL with rich observations via latent state
  decoding}.
\newblock In {\em Proceedings of the 36th International Conference on Machine
  Learning}, pages 1665--1674. PMLR, 2019.

\bibitem[\protect\citeauthoryear{Fang \bgroup \em et al.\egroup
  }{2021}]{fang2021universal}
Yuchen Fang, Kan Ren, Weiqing Liu, Dong Zhou, Weinan Zhang, Jiang Bian, Yong
  Yu, and Tie-Yan Liu.
\newblock Universal trading for order execution with oracle policy
  distillation.
\newblock In {\em Proceedings of the AAAI Conference on Artificial
  Intelligence}, volume~35, pages 107--115, 2021.

\bibitem[\protect\citeauthoryear{Gomber \bgroup \em et al.\egroup
  }{2015}]{gomber2015liquidity}
Peter Gomber, Uwe Schweickert, and Erik Theissen.
\newblock {Liquidity dynamics in an electronic open limit order book: An event
  study approach}.
\newblock {\em European Financial Management}, 21(1):52--78, 2015.

\bibitem[\protect\citeauthoryear{Gu{\'e}ant \bgroup \em et al.\egroup
  }{2012}]{gueant2012optimal}
Olivier Gu{\'e}ant, Charles-Albert Lehalle, and Joaquin Fernandez-Tapia.
\newblock Optimal portfolio liquidation with limit orders.
\newblock {\em SIAM Journal on Financial Mathematics}, 3(1):740--764, 2012.

\bibitem[\protect\citeauthoryear{Kirk \bgroup \em et al.\egroup
  }{2021}]{kirk2021survey}
Robert Kirk, Amy Zhang, Edward Grefenstette, and Tim Rockt{\"a}schel.
\newblock A survey of generalisation in deep reinforcement learning.
\newblock {\em arXiv preprint arXiv:2111.09794}, 2021.

\bibitem[\protect\citeauthoryear{Lin and Beling}{2020}]{lin2020deep}
Siyu Lin and Peter~A Beling.
\newblock A deep reinforcement learning framework for optimal trade execution.
\newblock In {\em Joint European Conference on Machine Learning and Knowledge
  Discovery in Databases}, pages 223--240. Springer, 2020.

\bibitem[\protect\citeauthoryear{Lin and Beling}{2021}]{lin2020end}
Siyu Lin and Peter~A Beling.
\newblock An end-to-end optimal trade execution framework based on proximal
  policy optimization.
\newblock In {\em Proceedings of the Twenty-Ninth International Conference on
  International Joint Conferences on Artificial Intelligence}, pages
  4548--4554, 2021.

\bibitem[\protect\citeauthoryear{Mao \bgroup \em et al.\egroup
  }{2017}]{mao2017neural}
Hongzi Mao, Ravi Netravali, and Mohammad Alizadeh.
\newblock Neural adaptive video streaming with pensieve.
\newblock In {\em Proceedings of the Conference of the ACM Special Interest
  Group on Data Communication}, pages 197--210, 2017.

\bibitem[\protect\citeauthoryear{Mao \bgroup \em et al.\egroup
  }{2018}]{mao2018variance}
Hongzi Mao, Shaileshh~Bojja Venkatakrishnan, Malte Schwarzkopf, and Mohammad
  Alizadeh.
\newblock Variance reduction for reinforcement learning in input-driven
  environments.
\newblock In {\em International Conference on Learning Representations}, 2018.

\bibitem[\protect\citeauthoryear{Nevmyvaka \bgroup \em et al.\egroup
  }{2006}]{nevmyvaka2006reinforcement}
Yuriy Nevmyvaka, Yi~Feng, and Michael Kearns.
\newblock Reinforcement learning for optimized trade execution.
\newblock In {\em Proceedings of the 23rd international conference on Machine
  Learning}, pages 673--680. PMLR, 2006.

\bibitem[\protect\citeauthoryear{Ning \bgroup \em et al.\egroup
  }{2018}]{ning2018double}
Brian Ning, Franco Ho~Ting Lin, and Sebastian Jaimungal.
\newblock {Double deep Q-learning for optimal execution}.
\newblock {\em arXiv preprint arXiv:1812.06600}, 2018.

\bibitem[\protect\citeauthoryear{Oroojlooyjadid \bgroup \em et al.\egroup
  }{2022}]{oroojlooyjadid2022deep}
Afshin Oroojlooyjadid, MohammadReza Nazari, Lawrence~V Snyder, and Martin
  Tak{\'a}{\v{c}}.
\newblock {A deep Q-network for the beer game: Deep reinforcement learning for
  inventory optimization}.
\newblock {\em Manufacturing \& Service Operations Management}, 24(1):285--304,
  2022.

\bibitem[\protect\citeauthoryear{Packer \bgroup \em et al.\egroup
  }{2018}]{packer2018assessing}
Charles Packer, Katelyn Gao, Jernej Kos, Philipp Kr{\"a}henb{\"u}hl, Vladlen
  Koltun, and Dawn Song.
\newblock Assessing generalization in deep reinforcement learning.
\newblock {\em arXiv preprint arXiv:1810.12282}, 2018.

\bibitem[\protect\citeauthoryear{Perold}{1988}]{perold1988implementation}
A.~F. Perold.
\newblock {The implementation shortfall: Paper vs. reality}.
\newblock {\em Journal of Portfolio Management}, 14(3):4--9, 1988.

\bibitem[\protect\citeauthoryear{Shahamiri}{2008}]{shahamiriname2008reinforcement}
Masoud Shahamiri.
\newblock Reinforcement learning in environments with independent delayed-sense
  dynamics.
\newblock {\em Master Thesis, University of Alberta}, 2008.

\bibitem[\protect\citeauthoryear{Shen \bgroup \em et al.\egroup
  }{2020}]{shen2020auxiliary}
Wei Shen, Xiaonan He, Chuheng Zhang, Qiang Ni, Wanchun Dou, and Yan Wang.
\newblock Auxiliary-task based deep reinforcement learning for participant
  selection problem in mobile crowdsourcing.
\newblock In {\em Proceedings of the 29th ACM International Conference on
  Information \& Knowledge Management}, pages 1355--1364, 2020.

\bibitem[\protect\citeauthoryear{Silver \bgroup \em et al.\egroup
  }{2014}]{silver2014deterministic}
David Silver, Guy Lever, Nicolas Heess, Thomas Degris, Daan Wierstra, and
  Martin Riedmiller.
\newblock Deterministic policy gradient algorithms.
\newblock In {\em Proceedings of the 31st International Conference on Machine
  Learning}, pages 387--395. PMLR, 2014.

\bibitem[\protect\citeauthoryear{Song \bgroup \em et al.\egroup
  }{2019}]{song2019observational}
Xingyou Song, Yiding Jiang, Stephen Tu, Yilun Du, and Behnam Neyshabur.
\newblock Observational overfitting in reinforcement learning.
\newblock {\em arXiv preprint arXiv:1912.02975}, 2019.

\bibitem[\protect\citeauthoryear{Vyetrenko \bgroup \em et al.\egroup
  }{2020}]{vyetrenko2020get}
Svitlana Vyetrenko, David Byrd, Nick Petosa, Mahmoud Mahfouz, Danial Dervovic,
  Manuela Veloso, and Tucker Balch.
\newblock {Get real: Realism metrics for robust limit order book market
  simulations}.
\newblock In {\em Proceedings of the First ACM International Conference on AI
  in Finance}, pages 1--8, 2020.

\bibitem[\protect\citeauthoryear{Wang \bgroup \em et al.\egroup
  }{2020}]{wang2020improving}
Kaixin Wang, Bingyi Kang, Jie Shao, and Jiashi Feng.
\newblock Improving generalization in reinforcement learning with mixture
  regularization.
\newblock {\em Advances in Neural Information Processing Systems},
  33:7968--7978, 2020.

\bibitem[\protect\citeauthoryear{Zhang \bgroup \em et al.\egroup
  }{2018a}]{zhang2018dissection}
Amy Zhang, Nicolas Ballas, and Joelle Pineau.
\newblock A dissection of overfitting and generalization in continuous
  reinforcement learning.
\newblock {\em arXiv preprint arXiv:1806.07937}, 2018.

\bibitem[\protect\citeauthoryear{Zhang \bgroup \em et al.\egroup
  }{2018b}]{zhang2018study}
Chiyuan Zhang, Oriol Vinyals, Remi Munos, and Samy Bengio.
\newblock A study on overfitting in deep reinforcement learning.
\newblock {\em arXiv preprint arXiv:1804.06893}, 2018.

\end{thebibliography}

\end{document}